\begin{document}

\title{A Caching Strategy Towards Maximal D2D Assisted Offloading Gain}

\author{Yijin Pan, Cunhua Pan, Zhaohui Yang, Ming Chen,
		and Jiangzhou Wang,~\IEEEmembership{Fellow,~IEEE}
\IEEEcompsocitemizethanks{\IEEEcompsocthanksitem Y. Pan, Z. Yang and M. Chen are with the National Mobile Communications Research Laboratory, Southeast University, Nanjing 211111, China. \protect\\
Email:\{panyijin,yangzhaohui,chenming\}@seu.edu.cn.
\IEEEcompsocthanksitem C. Pan is with the School of Electronic Engineering and Computer Science, Queen Mary, University of London, London E1 4NS, UK. \protect\\Email: c.pan@qmul.ac.uk.
\IEEEcompsocthanksitem J.Wang are with School of Engineering and Digital Arts, University of Kent, Canterbury, CT2 7NT,  United Kingdom.
\protect\\Email: J.Z.Wang@kent.ac.uk}}

\IEEEtitleabstractindextext{%
\begin{abstract}
Device-to-Device (D2D) communications incorporated with content caching have been regarded as a promising way to offload the cellular traffic data.
In this paper, the caching strategy is investigated to maximize the D2D offloading gain with the comprehensive consideration of user collaborative characteristics as well as the physical transmission conditions.
Specifically, for a given content, the number of interested users in different groups is different, and users always ask the most trustworthy user in proximity for D2D transmissions.
An analytical expression of the D2D success probability is first derived, which represents the probability that the received signal to interference ratio is no less than a given threshold.
As the formulated problem is non-convex, the optimal caching strategy for the special unbiased case is derived in a closed form, and a numerical searching algorithm is proposed to obtain the globally optimal solution for the general case.
To reduce the computational complexity, an iterative algorithm based on the asymptotic approximation of the D2D success probability is proposed to obtain the solution that satisfies the Karush-Kuhn-Tucker conditions.
The simulation results verify the effectiveness of the analytical results and show that the proposed algorithm outperforms the existing schemes in terms of offloading gain.
\end{abstract}

}
\maketitle

\IEEEpeerreviewmaketitle
\IEEEraisesectionheading{\section{Introduction}\label{sec:introduction}}

\IEEEPARstart{T}he current wireless network infrastructure is expected to provide explosively increasing data traffic in the near future \cite{index2015global}.
Although some crucial technologies, such as multiple input multiple output (MIMO) \cite{6180090,5770661}, are promising to increase spectral efficiency, ``offloading'' is another important technology to increase network capacity.
Device-to-Device (D2D) communications have shown great potential to offload traffic from the network backbone\cite{rebecchi2015data,chen2015energy}.
Through D2D communications, mobile devices can directly connect with other devices in proximity.
When a mobile device requests a content, if the nearby devices happened to have the content in local storages, the content can be delivered via the D2D communications without incurring the cost of cellular bandwidth.
Moreover, by reducing the distance between contents and requesters, the download delay and network traffic load can be effectively reduced.

The offloading benefits are achievable only when the nearby users happened to have the requested content in their local storage.
Thus, in order to reap the offloading gain, the content caching strategy should be carefully designed and investigated.
For instance, the D2D caching strategy in \cite{JiMJasc} was designed to satisfy both outage probability requirement and per-user throughput requirement.
An optimal D2D collaboration distance to maximize the D2D link number was investigated in \cite{NeginTWC}.
However, these results were obtained based on the simplistic grid network model.
An emerging alternative is to model the locations of users and base stations (BSs) as the Poisson point process (PPP), which gives a tractable expression of the received signal-to-interference-ratio (SIR).
By adopting the PPP model, a probabilistic caching strategy was optimized in \cite{Malak} to maximize the successful receptions of contents.
An optimal caching and transmit time scheduling to maximize successful offloading probability was investigated in \cite{scheduling}.
These approaches mainly focus on the D2D assisted content delivery process, where the caching strategy is optimized with the constraints of SIR requirements, delay, and storage size.

 As D2D communications heavily rely on the collaborative interactions of mobile users, an increasing number of contributions start to exploit the human characteristics and social information of the mobile device holders to enhance the network performance.
It has been widely recognized that the content preferences of the different individuals have a significant influence on D2D assisted content delivery process.
The content preference reflects the different interests of human users for the same content \cite{li2014multiple}.
For instance, in \cite{Cooperative2017}, users were categorized into different groups according to the content request distributions, and the caching strategy was designed for each user group.
The homogeneous marked PPP distribution was employed in \cite{mustafa2017spatial} to model the spatial and social relations of D2D devices, where the Zipf based thinning is applied to obtain the distributions of users with different preferences.
When the user preferences are unknown, a learning approach was proposed in \cite{chen2017caching} to estimate the preference for the design of caching strategy.

Except for the preference, another big challenge is understanding the impact of trustworthiness information on the user cooperation behavior.
In the practical implementation, there may be malicious users that can corrupt and manipulate the content, however, which are expected to forward to other users.
If these untrustworthy factors are ignored, the manipulation and fraud of the contents are very likely to outweigh the benefits of user cooperations.
Actually, the D2D users can be aware of the trustworthiness information of their D2D partner from various ways \cite{li2012routing,jaho2013social}.
Moreover, the BS can act as the centralized trust authority since the data corruption and disobedience can be identified by checking the timeout\cite{militano2017nb,jedari2018survey,7239640}.
Consequently, to cope with the threats from malicious users and block the insecure links, the rational users will always ask the most trustworthy nodes in proximity for the reliable content transmission \cite{jedari2018survey}.

Furthermore, apart from the impacts of user preference and trustworthiness, the successful D2D transmissions also depend on the physical communication conditions.
Unfortunately, existing contributions mainly considered social characteristics in D2D caching, while ignoring the physical transmission conditions such as path loss, channel fading and underlaid cellular interference\cite{cao2016social, SocialChen, Hypergraph, SociallyWang}.
For instance, the authors in \cite{7875158} investigated content hit probability, which only considered the user distance, and the cache management scheme in \cite{8403950} was only based on the content popularity.
Recently, there have been some approaches starting to jointly explore the impacts of the physical communication conditions and the social characteristics on the D2D assisted offloading performance, such as \cite{yi2018incentive,8267090}.
In \cite{yi2018incentive}, the potential of each UE in D2D offloading was modeled by a empirical-based function, which decreases with the power cost and increases with the users' social influence.
Although this simplified model of user's social characteristics can make the formulated problem tractable, the impacts of the aforementioned preference and trustworthiness constraints on the offloading performance is still unclear.
The social relationship among users was assumed to be a decreasing function of their physical distance in \cite{8267090}, however, this model is not applicable to the scenarios where strangers are geographically adjacent, such as gyms and shopping malls.
Therefore, how to design efficient user caching strategy with the presence of the different user interests and their trustworthiness as well as the physical transmission constraints is still an unsolved problem.

In this paper, the user caching strategy to maximize the offloading gain is investigated by considering the user preference, different levels of trustworthiness, and the physical transmission conditions.
In the considered model, the number of interested users for a given reference content is different for different user groups, and users are rational so that they always incline to ask the most trustworthy user in proximity for D2D transmission.
To conduct a successful D2D transmission, the received SIR from the users who have cached contents should be no less than a given threshold.
Then, the offloading gain is evaluated by the average successful receptions of contents transmitted via D2D links.
With the target to maximize the offloading gain, the caching density in each user group is then optimized by solving the formulated non-convex optimization problem.

The main contributions are summarized as follows:
\begin{itemize}
	\item
	 The D2D success probability is analytically derived, which is expressed as a function of the interested user density, the trustworthiness of users and physical transmission conditions. In the work, users with different preferences can be geographically adjacent, and the user providing D2D offloading is selected according to the biased received power, so that only the user with the sufficient short transmission distance and the sufficient large trustworthiness can be selected.
	
	\item
	 With the target to maximize the offloading gain, the optimal caching strategy is first developed for a special case where the trust bias of each group is the same. Then, a numerical searching algorithm based on the gradient projection and two-dimensional searching is proposed to obtain the globally optimal solution for the general case.
	
	\item
	 To reduce the computational complexity, an asymptotic approximation of the D2D success probability is proposed by relaxing the D2D range restriction. Based on the obtained asymptotic approximation, an iterative algorithm is proposed to obtain the solution that satisfies the Karush-Kuhn-Tucker (KKT) conditions.
	
	\item
	The simulation results validate the effectiveness of the derived D2D successful transmission probability and the corresponding asymptotic approximation.
	Then, it is shown that our proposed caching scheme achieves a larger offloading gain than the existing caching strategies.
\end{itemize}	

The rest of this paper is organized as follows. The system model is described in Section II, and the derived performance metric and optimization problem is presented in Section III. The optimization of caching strategy is addressed in Section IV. Finally, we present the simulation results in Section V, and conclude our work in Section VI.

Notations: Capital and lower-case bold letters denote matrices and vectors, respectively.
The superscripts $[X]^{-1}$ and $[X]^T$ denote inversion and transpose, respectively.
$\bm{I}_M$ represents an $M\times M$ identity matrix.
$\bm{1}_M$ denotes an $M \times 1$ vector of ones.
`s.t.' is short for `subject to'.
$\mathbb{E}_{\{x\}}\{y\}$ means the expectation of y over x.

\section{System Model}

Consider a network where the mobile users are equipped with caches to store their interested contents.
The cache-enabled users can share the cached content with others via D2D communications.
When BS needs to deliver a content (here referred as the reference content
\footnote{The proposed scheme focuses on the scenario of delivering a certain piece of content, and mobile users' caching capacities are sufficient for the reference content.}) to the interested users, cellular traffic can be effectively offloaded by D2D links.
Specifically, a subset of users store the reference content in their local cache, and then the other interested users can get this content via D2D communications, instead of downloading from the BS.

To identify the impact of social characteristics, the users are classified into groups based on the user profile and social similarity.
The user profile contains the key features that can reflect the users' preferences and trustworthiness for cooperation, such as the personal data (name, age ...), interests and preferences (searching keywords, language, habits ...), system logs (browsing history, historical success cooperations... ).
Based on the user profile, the emerging convolutional neural network (CNN) \cite{krizhevsky2012imagenet} is employed for user classification.
First, some typical users belonging to the particular types are selected as the training data set.
According to the social characteristics, assume that users can be classified into $M$ disjoint groups.
The set of group indices is denoted by $\mathcal{M}=\{1,2, \cdots ,M\}$.
Then, the coefficients of CNN is trained and optimized by employing the profile data of these typical users, which imply the implicit features to distinguish user.
With the trained CNN, the left users are filtered and classified into the corresponding groups.

It is worth pointing out that the classification of users is based on the behavior similarity rather than the proximity of geographical location information.
The reason is that the users with different preferences are not strictly separated geographically. Instead, users of different social types can be adjacent, e.g., in a stadium or large shopping mall.
Consequently, for the reference content, the locations of interested users in group $m$ are deployed according to the PPP distribution denoted by ${\Phi}_m$ with density $\lambda_m$.
In other words, the interested user density $\lambda_m$ also reflects the user preference of different groups.
Under this assumption, it is possible for a user to have users from other groups in proximity.

We use ``UTs'' to denote the users who caches the reference content.
According to the thinning property of PPP\cite{chiu2013stochastic}, the locations of UTs in group $m$ also follow the PPP distribution, and the corresponding density is denoted by $c_m$, where $c_m \in [0,\lambda_m]$.
It is worth pointing out that the caching density can be interpreted as the product of the user density $\lambda_m$ and a caching probability $q_m$, i.e., $c_m = \lambda_m q_m$, where the caching probability $q_m$ represents the probability that each user in the group $m$ caches the given reference content.
To practically implement a given caching strategy $c_m$, each user in the group $m$ generates a uniformly distributed random number within the range of $[0,1]$.
Then, if the generated random number is less than or equal to the caching probability $q_m$, this user should cache the content.
Otherwise, i.e., the generated random number is larger than the caching probability $q_m$, this user does not cache the content.
In this way, a thinned user group with density $c_m$ is readily obtained.

In addition, there are some users who did not cache the reference content, but they are also interested in it.
Consequently, they will request the reference content from the nearby UTs, and this kind of users are named as ``URs''.
The density of URs in group $m$ is $\lambda_m-c_m$.
According to the Slivnyak's theorem \cite{chiu2013stochastic}, the properties observed by a typical point of the PPP, $\phi$, is the same as those observed by the point at origin in the process $\phi \cup \{0\}$.
As a result, we consider a typical UR located in the origin as a reference UR.

\begin{figure}
	\centering
	\includegraphics[width=1\linewidth]{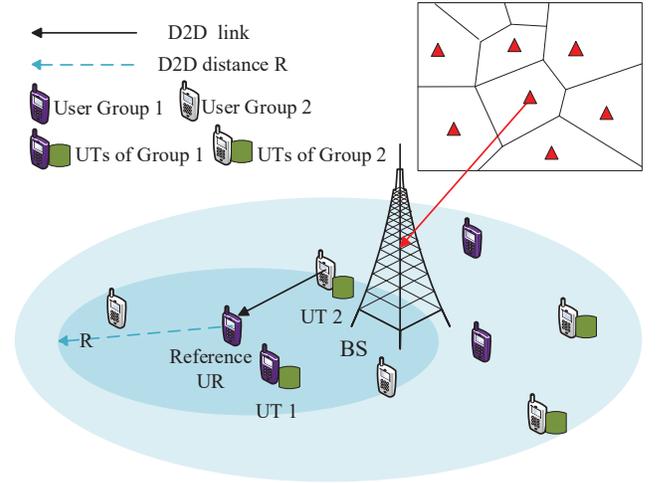}
	\vspace{-0.5em}
	\caption{ D2D communications between users from different groups in a typical cell. The upper plot shows the BSs (red triangles) in the adjacent cells, and the locations of BS locations follow the PPP distribution.}
	\label{fig0}
	\vspace{-1.5em}
\end{figure}

{As the BS can be a centralized authority of the levels of users' trustworthiness, D2D users can be aware of the trustworthiness information of their D2D partner.
To cope with the threats from malicious users, URs are inclined to ask the most trustworthy UTs in proximity for content transmission.
Consequently, the serving UT may not be the nearest UT.}
As shown in Fig. \ref{fig0}, for the reference UR, there are two UTs in proximity, i.e., UT 1 and UT 2 are from group 1 and group 2, respectively.
Due to the fact that group 2 is more trustworthy than group 1, the UR will request the content from UT 2, though UT 1 is closer than UT 2.

To reflect the URs' behavior of selecting the serving UT based on trustworthiness, the trust bias is introduced to show how trustworthy the users in group $m$ can be.
Let $B_m$ denote the trust bias value of group $m$, and the	value of $B_m$ is in the range of $[0, 1]$ \cite{chen2014dynamic,militano2016trust}.
A larger $B_{m}$ corresponds to higher trust for the UTs in group $m$, where ``$1$'' indicates complete trust and ``$0$'' means totally distrust.
To evaluate the trust bias $B_m$ of group $m$, we consider that the trustworthiness depends mainly on the historical success transmissions \cite{militano2016trust}.
Within a detection time window, the BS counts the average number of the correctly verified contents sent by each user in group $m$, and denote it as $\mathbb{N}_m$.
Then, the trust bias can be obtained by normalizing the $\mathbb{N}_m$ for all the $M$ groups, i.e.
\begin{equation}
B_m = \frac{\mathbb{N}_m}{\sum_{m=1}^{M}\mathbb{N}_m}.
\end{equation}
Moreover, we also have $\sum_{m=1}^{M}B_m =1$.

Meanwhile, the successful D2D cooperation not only depends on the user social characteristics but also depends on the physical communication conditions.
To jointly incorporate this trust-driven user behavior and the transmission distance constraints together, we define the biased-received-power (BRP).
Specifically, the BRP for the UTs in group $m$ is defined as
\begin{equation}
P_m =  p_t B_m d_{m}^{-\alpha},
\end{equation}
where $p_t$ is the D2D transmit power,	$d_{m}$ is the distance from the nearest UT in group $m$ to the reference UR, $\alpha$ is the path-loss exponent.

Then, the D2D links are selected according to averaged maximum biased-received-power (BRP), i.e., UTs with maximum BRP will serve the D2D transmission.
According to the maximum BRP criteria, faraway UTs or the UTs with low trustworthiness will not be selected by the UR for requesting content.
In this way, the physical transmission constraints and the trustworthy constraints are jointly considered.
Obviously, when $B_m =B_k, \forall m,k \in \mathcal{M}$, it reduces to the no biasing case, and the UR is served by the UT with strongest average received power. \footnote{Selecting the nearest UR to be the serving UR may not be the optimal but is commonly adopted in current works due to its tractability\cite{7605533,deng2018benefits}.}

Similar to \cite{7893755,7470622,8057276}, we assume that the D2D communications are underlaid with cellular downlinks, where the locations of BSs also follow a PPP distribution denoted by ${\Phi}_B$ with density $\lambda_B$ and $\lambda_B \ll \lambda_m$.
To reduce the system interference, we assume that UTs in each group reuse the same frequency, while the UTs in different groups adopt orthogonal frequencies\cite{5281762,6094142,6180097}.
The assumption is justified when all UTs in each group are uniformly generated in the network.
By allocating UTs from different groups with orthogonal frequencies, the distances between UTs with the same channel can be efficiently enlarged, and the received interference in URs can be notably reduced.
We consider the interference limited scenario where the thermal noise can be ignored.
When the serving UT is from group $m$, the received SIR of the reference UR is
\begin{equation}\label{SIR}
\text{SIR}_{m} = \frac{p_t |h_{m}|^2 d_{m}^{-\alpha}}
{ \sum_{i \in {\Phi}'_m\setminus\{T^m_0\}} p_t |h_{i}|^2 d_{i}^{-\alpha} +  \sum_{j \in {\Phi}_B} p_B |h_{j}|^2 d_{j}^{-\alpha}},
\end{equation}
where $h_{m}$ represents the channel fading from serving UT to UR, and the channel gain $|h_{m}|^2$ is exponentially distributed with united mean value, i.e., $|h_{m}|^2 \sim \exp(1)$.
Term $\sum_{i \in {\Phi}'_m\setminus\{T^m_0\}} p_t |h_{i}|^2 d_{i}^{-\alpha}$ represents the interference from other active UTs in group $m$ except the serving UT denoted by $T^m_0$, where ``active UTs'' refers to the UTs that are paired up with URs conducting D2D transmissions. ${\Phi}'_m$ denotes the set of the interfering UTs in group $m$ and ${\Phi}'_m \subseteq {\Phi}_m$.
The distance from the interfering UT $i$ to UR is denoted by $d_{i}$.
Term $\sum_{j \in {\Phi}_B} p_B |h_{j}|^2 d_{j}^{-\alpha}$ represents the interference from underlaid BSs.
The distance from the interfering BS $j$ to UR is denoted by $d_{j}$.
$|h_{i}|^2$ and $|h_{j}|^2$ are exponentially distributed channel gains with unit mean.

\newtheorem{proposition}{\textbf{Proposition}}
\newtheorem{theorem}{\textbf{Theorem}}
\newtheorem{lemma}{\textbf{Lemma}}
\newtheorem{corollary}{\textbf{Corollary}}

\section{Performance Metrics and Problem Formulation}

The URs in each group are those who do not have the reference content in their cache so that the density of URs in group $m$ is $\lambda_m -c_m$.
For a UR from one of the $M$ groups, its serving UT can come from any of the $M$ groups.
Since different types of users can be geographically adjacent, the D2D success probability for any URs should depend on the cache densities in all the $M$ groups.
Let $\mathbb{P}_{m}^{th}[\text{SIR}_{m}>\gamma_{th}]$ denoted the D2D success probability in the $m$-th group, where $\gamma_{th}$ is the required received SIR threshold.

Then, based on the law of total probability, the D2D success probability of a UR is calculated as
\begin{equation}
\mathbb{P}_{s} = \sum_{m=1}^M \mathcal{P}_m \mathbb{P}_{m}^{th}[\text{SIR}_{m}>\gamma_{th}],\label{Eq_ps}
\end{equation}
where $\mathcal{P}_m$ represents the probability that the serving UT is from group $m$.

To measure the offloading performance for the D2D assisted offloading network, we are interested in the average number of contents successfully transmitted via D2D links, namely D2D-aided offloading gain.
{
Specifically, the offloading gain of the $m$-th group is
\begin{equation}
\mathcal{U}_m = \mathbb{P}_{s} (\lambda_m-c_m),
\end{equation}
and the total offloading gain is
\begin{equation}
\mathcal{U} = \sum_{i=1}^M(\lambda_i-c_i) \mathbb{P}_{s}.
\end{equation}}

To obtain the expression of $\mathcal{U}$, we derive the expressions for $\mathcal{P}_m$ and $\mathbb{P}_{m}^{th}$ in the following propositions.

\begin{proposition} \label{pro_1}
	The probability that the serving UT is from group $m$ is
	\begin{equation}
	\mathcal{P}_m = \frac{{B_m}^{\frac{2}{\alpha}}c_m}{\sum_{i=1}^M{{B_i}^{\frac{2}{\alpha}} c_i}}
	\left(1- \exp \left(-\pi {B_m}^{-\frac{2}{\alpha}} \sum_{i =1}^M c_i B_i^{\frac{2}{\alpha}} R^2\right)\right), \label{pm1}
	\end{equation}
	where $R$ is the maximum D2D transmission distance.
\end{proposition}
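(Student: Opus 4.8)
The plan is to condition on a typical UR sitting at the origin (legitimate by the Slivnyak argument already invoked in Section~II) and to reduce the event in the statement to a statement about the $M$ nearest-UT distances. Because the UTs of group $i$ form an \emph{independent} PPP of density $c_i$---the independent thinning of $\Phi_i$ together with the orthogonal-frequency assumption make the $M$ point processes mutually independent---the distance $d_i$ from the origin to the nearest UT of group $i$ is Rayleigh distributed with density $f_{d_i}(r)=2\pi c_i r\,e^{-\pi c_i r^2}$, which follows by differentiating the PPP void probability $\Pr[d_i>r]=e^{-\pi c_i r^2}$; moreover $d_1,\dots,d_M$ are independent.

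Next I would rewrite the selection rule as an event on $(d_1,\dots,d_M)$. The UR is served by the UT attaining the largest BRP $P_i=p_tB_id_i^{-\alpha}$ over all groups, and a D2D link is feasible only if that UT lies within the maximum range $R$; hence the serving UT belongs to group $m$ precisely when $d_m\le R$ and $P_m\ge P_i$ for every $i\neq m$. The inequality $p_tB_md_m^{-\alpha}\ge p_tB_id_i^{-\alpha}$ is equivalent to $d_i^{2}\ge (B_i/B_m)^{2/\alpha}d_m^{2}$, so, conditioning on $d_m=r$ and invoking independence and the void probability once more, the probability that this holds simultaneously for all $i\neq m$ equals $\prod_{i\neq m}\exp\!\big(-\pi c_i(B_i/B_m)^{2/\alpha}r^2\big)=\exp\!\big(-\pi B_m^{-2/\alpha}r^2\sum_{i\neq m}c_iB_i^{2/\alpha}\big)$.

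Then I would integrate the conditional probability against $f_{d_m}$ over the admissible interval $r\in[0,R]$:
\[
\mathcal{P}_m=\int_0^{R}2\pi c_m r\,e^{-\pi c_m r^2}\exp\!\Big(-\pi B_m^{-2/\alpha}r^2\sum_{i\neq m}c_iB_i^{2/\alpha}\Big)\,dr .
\]
Using $c_m=B_m^{-2/\alpha}\,c_mB_m^{2/\alpha}$, the two Gaussian exponents merge into $-A r^2$ with $A=\pi B_m^{-2/\alpha}\sum_{i=1}^{M}c_iB_i^{2/\alpha}$, and the elementary integral $\int_0^{R}2\pi c_m r\,e^{-A r^2}\,dr=(\pi c_m/A)\bigl(1-e^{-A R^2}\bigr)$ reproduces \eqref{pm1} after substituting $A$.

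The computation itself is routine; the points that need care are conceptual rather than computational. First, one must be explicit that ``group $m$ serves'' is the \emph{joint} event $\{d_m\le R\}\cap\{P_m\ge P_i\ \forall i\neq m\}$, so that $\sum_{m}\mathcal{P}_m<1$ in general, the deficit being the probability that even the globally best UT is farther than $R$ and no D2D offloading takes place---this is exactly why $\mathbb{P}_s=\sum_m\mathcal{P}_m\mathbb{P}_m^{th}$ in \eqref{Eq_ps} is the correct (unconditional) success probability. Second, the factorization of the conditional probability over $i\neq m$ must be justified by the independence of the per-group PPPs assumed in Section~II; without orthogonal frequencies across groups this independence, and hence the clean product form, would fail. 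Everything beyond that is the standard maximum-biased-power association calculation for a multi-tier PPP, here specialized to a bounded association region of radius $R$.
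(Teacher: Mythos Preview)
Your proposal is correct and follows essentially the same route as the paper: condition on the nearest-UT distance $d_m=r$, convert the BRP comparison $P_m\ge P_i$ into $d_i\ge (B_i/B_m)^{1/\alpha}r$, apply the PPP void probability for each $i\neq m$, and integrate $2\pi c_m r\,e^{-Ar^2}$ over $[0,R]$. One small remark: the mutual independence of the per-group UT processes comes from the modeling assumption that the $\Phi_m$ are independent PPPs together with independent thinning, not from the orthogonal-frequency assumption (which concerns interference, not spatial independence); otherwise your justifications are on point and slightly more explicit than the paper's.
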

\begin{proof}
	According to the definition of BRP, we have
	\begin{eqnarray}
	\mathcal{P}_m &&\!\!\!\!\!\!\!\!= \mathbb E_{d_m} \left[P_m(d_m) > \max _{i \neq m} P_i (d_i)\right]  \\
	&&\!\!\!\!\!\!\!\!=\int _{0} ^{R} \prod _{i=1, i\neq m}^M \mathbb P\left[d_i > \left(\frac{B_i}{B_m}\right)^{\frac{1}{\alpha}} r \right] f_{d_m} (r) \mathrm d r \label{Eq_Pro1},
	\end{eqnarray}
	where $R$ is the maximum D2D transmission distance.
	According to the null probability of the 2D Poisson process \cite{chiu2013stochastic},
	\begin{eqnarray}
	\mathbb P\left[d_i > \left(\frac{B_i}{B_m}\right)^{\frac{1}{\alpha}} r \right]
	= \exp \left(-\pi c_i  \left(\frac{B_i}{B_m}\right)^{\frac{2}{\alpha}} r^2\right). \label{cdf}
	\end{eqnarray}
	
	Similarly, the probability density function (PDF) of $d_m$ denoted by $f_{d_m} (r)$ is
	\begin{equation} \label{pdf}
	f_{d_m} (r)= \frac{\mathrm d (1-\mathbb{P}[d_m > r])}{ \mathrm d r }= 2 \pi c_m r \exp(-\pi c_mr^2).
	\end{equation}
	Substituting (\ref{cdf}) and (\ref{pdf}) into (\ref{Eq_Pro1}), we have
	\begin{eqnarray}
	\mathcal{P}_m &&\!\!\!\!\!\!\!\!\!\!\!\!\!=\int _{0} ^{R}  2 \pi c_m r \exp \left(-\pi \sum_{i =1}^M c_i \left(\frac{B_i}{B_m}\right)^{\frac{2}{\alpha}} r^2 \right) \mathrm d r \\
	&&\!\!\!\!\!\!\!\!\!\!\!\!\!= \frac{{B_m}^{\frac{2}{\alpha}}c_m}{\sum_{i=1}^M{{B_i}^{\frac{2}{\alpha}} c_i}}
	\left(\!\!1-\!\!\exp \left(\!\!-\pi {B_m}^{-\frac{2}{\alpha}}\!\!\sum_{i =1}^M c_i B_i^{\frac{2}{\alpha}} R^2\!\!\right)\!\!\right).
	\end{eqnarray}
\end{proof}

{According to $\mathcal{P}_m$ given in Proposition 1, we have $	\lim_{B_m \to 0 } \mathcal{P}_m  =  0, \lim_{d_m \to \infty} \mathcal{P}_m  =  0$.
In other words, the faraway UTs or the UTs with low trustworthiness will not be selected by UR for requiring D2D transmission.
Therefore, to obtain the expression of $\mathbb{P}_{m}^{th}$, we first need to characterize the ratio of active UTs in each group.
Let ${\Phi}'_i, \forall i \in \mathcal{M}$ denote the set of active users in group $i$ conducting D2D transmissions, and $\rho_i$ represents the ratio of active users in group $i$.}
Since only the UT will conduct a D2D link, the distribution of ${\Phi}'_i, \forall i \in \mathcal{M}$ is a PPP with density $ \rho_i c_i$.
Note that $\rho_i$ is also a function of $c_m$ for each group, and it can be approximated in the following proposition.

\begin{proposition} \label{pro_21}
	The ratio of active users in group $m$ is
	\begin{multline}
	\rho_m = 1-\left(1+\frac{\mathcal{P}_m\sum_{m=1}^M (\lambda_m-c_m)}{3.5c_m} \right)^{-3.5} \\ \frac{\Gamma(3.5,(\sum_{m=1}^M(\lambda_m -c_m)+\frac{3.5c_m}{\mathcal{P}_m})\pi R^2)}{\Gamma(3.5, \frac{3.5c_m}{\mathcal{P}_m}\pi R^2)},\label{Eq_Rho}
	\end{multline}
	where $\Gamma(a,b) = \int_0^{b}t^{a-1} e^{-t} \mathrm{d}t$.
\end{proposition}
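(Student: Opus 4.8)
The plan is to characterize a group-$m$ UT as \emph{active} precisely when at least one UR is associated with it under the maximum-BRP rule together with the range cap $R$, so that $\rho_m = 1 - \mathbb{P}[\text{no UR associates with the typical group-}m\text{ UT}]$. First I would place a typical group-$m$ UT at the origin and invoke Slivnyak's theorem: the remaining UTs of every group are still PPPs with their original densities $c_i$, and the URs form an independent PPP of density $\mu := \sum_{i=1}^M(\lambda_i-c_i)$. A potential UR location $y$ with $|y|\le R$ is served by the origin UT iff $B_m|y|^{-\alpha}$ exceeds $B_i|y-x|^{-\alpha}$ for every other UT $x$ in every group $i$; by the void probability of each PPP this has probability $\exp(-\pi r^2\sum_i c_i(B_i/B_m)^{2/\alpha})$ at distance $r=|y|$, exactly as in \eqref{cdf}. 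Integrating over the disk of radius $R$ shows the mean area of the association (serving) region equals $\int_0^R \exp(-\pi w_m r^2)\,2\pi r\,\mathrm dr$ with $w_m=B_m^{-2/\alpha}\sum_i c_i B_i^{2/\alpha}$, which by Proposition~\ref{pro_1} is $\mathcal{P}_m/c_m$.

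Next, conditioned on the entire UT configuration the serving region is a deterministic set of some area $A$, and since the URs are an independent PPP the number of URs it contains is Poisson with mean $\mu A$; hence $\rho_m = 1 - \mathbb{E}_A[e^{-\mu A}]$, and all the modelling effort is concentrated in the law of $A$. The exact distribution of the area of a cell in a truncated weighted Poisson--Voronoi tessellation is intractable, so I would use the standard stochastic-geometry load-analysis approximation that such a cell area is Gamma-distributed with shape $3.5$; matching its mean to the value $\mathcal{P}_m/c_m$ computed above fixes the rate at $a:=3.5c_m/\mathcal{P}_m$, and since the serving region is contained in $B(0,R)$ I would additionally restrict this Gamma law to $[0,\pi R^2]$, giving the density $f_A(s)=a^{3.5}s^{2.5}e^{-as}/\Gamma(3.5,a\pi R^2)$ on that interval, where the normalizing constant is exactly the (lower) incomplete Gamma as defined in the statement.

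Finally I would compute $\mathbb{E}_A[e^{-\mu A}]=\int_0^{\pi R^2}e^{-\mu s}f_A(s)\,\mathrm ds$: pulling out constants and substituting $t=(\mu+a)s$ reduces the integral to $(\mu+a)^{-3.5}\int_0^{(\mu+a)\pi R^2}t^{2.5}e^{-t}\,\mathrm dt$, so that $\mathbb{E}_A[e^{-\mu A}]=\big(1+\mu/a\big)^{-3.5}\,\Gamma(3.5,(\mu+a)\pi R^2)/\Gamma(3.5,a\pi R^2)$; substituting $a=3.5c_m/\mathcal{P}_m$ and $\mu=\sum_i(\lambda_i-c_i)$ into $\rho_m=1-\mathbb{E}_A[e^{-\mu A}]$ gives \eqref{Eq_Rho}. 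The void-probability computation, the $2\pi r\,\mathrm dr$ integral, and the final substitution are mechanical; the real content—and the reason the proposition only claims an approximation—is the Gamma$(3.5)$ model for the range-truncated weighted-Voronoi cell area together with the mean-matching $\bar A=\mathcal{P}_m/c_m$, and that is the step I expect to require the most justification (or an appeal to the cited prior work on PPP cell-area statistics).
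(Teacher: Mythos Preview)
Your argument is essentially the same as the paper's: both model the group-$m$ UT's association region by the Gamma(3.5) Voronoi-cell area law with mean $\mathcal{P}_m/c_m$, restrict it to $[0,\pi R^2]$, and then compute $\rho_m=1-\mathbb{E}[e^{-\mu S}\mid S\le\pi R^2]$ via the incomplete-Gamma integral. The only cosmetic difference is that you obtain $\bar A=\mathcal{P}_m/c_m$ by a direct void-probability integral over the disk, whereas the paper invokes ergodicity of the PPP (total covered fraction $\mathcal{P}_m$ divided by UT density $c_m$) and cites prior work for the Gamma(3.5) model; both routes yield the same mean and the same final expression.
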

\begin{proof}
	A UT will be active when it is associated with at least one UR via D2D links.
	According the definition of being active, it will be easy to calculate $\rho_i$ via its complementary event.
	We denote the $\bar \rho_m$ as the probability that a UT in group $m$ is not active, i.e., the UT does not transmit to any UR.
	Similar as \cite{lee2012coverage,singh2013offloading,liu2017caching}, the coverage region of a UT is modelled as the Poisson Voronoi cell.
	According to \cite{ferenc2007size}, the probability density function (PDF) of a typical Voronoi cell area $S$ in a Poisson random tessellation is given by
	\begin{equation}
	f_S(s) = \frac{3.5^{3.5}}{\Gamma(3.5,\infty)}\hat{S}^{-3.5} s^{2.5} \exp\left(-3.5 \frac{s}{\hat{S}}\right),
	\end{equation}
	where $\Gamma(a,b) = \int_0^{b}t^{a-1} e^{-t} \mathrm{d}t$ and $\hat{S}$ is the mean value of random variable $S$.
	According to the Remark 2 in \cite{singh2013offloading}, the probability $\mathcal{P}_m$ can be viewed as the average fraction of the total area covered by the association regions of UTs in group $m$ by using the ergodicity of the PPP.
	We assume the total area is $S_{tot}$, and the average total number of UTs is $c_mS_{tot}$.
	Similar to \cite{liu2017caching}, the mean association region of a typical UT in group $m$ is approximated by
	\begin{equation}
	\hat{S}_m =\frac{\mathcal{P}_mS_{tot}}{S_{tot}c_m} = \frac{\mathcal{P}_m}{c_m}.
	\end{equation}
	Then, $\tilde{\rho}_m$ is calculated by
	\begin{eqnarray}
	\tilde{\rho}_m &&\!\!\!\!\!\!\!\! = \mathbb{P}(\text{no UR within }S| S \leq \pi R^2) \\
	&&\!\!\!\!\!\!\!\! = \frac{\int_0^{\pi R^2} \exp(- s \sum_{m=1}^M(\lambda_m -c_m) ) f_S(s) \mathrm d x}{\int_0^{\pi R^2} f_S(s) \mathrm d s}\\
	&&\!\!\!\!\!\!\!\! =  \left(1+\frac{\mathcal{P}_m\sum_{m=1}^M (\lambda_m -c_m)}{3.5c_m} \right)^{-3.5} \nonumber \\
	&&\frac{\Gamma(3.5,(\sum_{m=1}^M(\lambda_m -c_m)+\frac{3.5c_m}{\mathcal{P}_m})\pi R^2)}{\Gamma(3.5, \frac{3.5c_m}{\mathcal{P}_m}\pi R^2)}.
	\end{eqnarray}
	Consequently, the ratio of active users in group $m$ is obtained as
	\begin{equation}
	\rho_m = 1- \tilde{\rho}_m.
	\end{equation}
\end{proof}

Given the ratio of active users in each group, the user density of ${\Phi}'_m$ denoted by $\rho_m c_m$ can be obtained accordingly.

\begin{proposition}\label{pro_2}
	If the serving  ``UT'' is from group $m$, the probability that the received SIR is beyond the threshold is
	\begin{equation}
	\mathbb{P}_{m}^{th}[\text{SIR}_{m}>\gamma_{th}] =\frac{\pi R^2 c_m}{\mathcal{P}_m\varphi_m} (1-\exp(-\varphi_m)),  \label{pm2}
	\end{equation}
	where $$\varphi_m = \pi R^2 \left(\sum_{i=1}^{M} c_i \left(\frac{B_i}{B_m}\right)^{\!\!\!\frac{2}{\alpha}} + \lambda_B \theta_B +c_m \rho_m \theta_I\right), $$
	$$\theta_I= {\gamma_{th}}^{\frac{2}{\alpha}} \int_{\gamma_{th}^{-\frac{2}{\alpha}}}^{\infty}\frac{1}{1+ u^{\frac{\alpha}{2}}} \mathrm d u , \theta_B= {\left(\gamma_{th} \frac{p_B}{p_t}\right) }^{\frac{2}{\alpha}} \int_{0}^{\infty}\frac{1}{1+ u^{\frac{\alpha}{2}}} \mathrm d u .$$
\end{proposition}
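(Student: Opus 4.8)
The plan is the standard stochastic-geometry argument: condition on the serving UT, use the exponential fading of the desired link to turn the SIR event into a product of Laplace transforms of the two interference fields, evaluate those transforms with the probability generating functional (PGFL) of a PPP, and then de-condition over the serving distance $d_m$ using the joint density already extracted in the proof of Proposition~\ref{pro_1}. Concretely, I would first condition on the serving UT being from group $m$ at distance $d_m=r$; writing $I_m=\sum_{i\in\Phi'_m\setminus\{T^m_0\}}p_t|h_i|^2 d_i^{-\alpha}$ and $I_B=\sum_{j\in\Phi_B}p_B|h_j|^2 d_j^{-\alpha}$ for the two interference sums in \eqref{SIR}, and using $|h_m|^2\sim\exp(1)$ independently of everything else,
\[
\mathbb{P}\big[\text{SIR}_m>\gamma_{th}\,\big|\,d_m=r\big]=\mathbb{E}\Big[\exp\big(-\tfrac{\gamma_{th}r^{\alpha}}{p_t}(I_m+I_B)\big)\Big]=\mathcal{L}_{I_m}\big(\tfrac{\gamma_{th}r^{\alpha}}{p_t}\big)\,\mathcal{L}_{I_B}\big(\tfrac{\gamma_{th}r^{\alpha}}{p_t}\big),
\]
the factorization being valid because $\Phi'_m$, $\Phi_B$ and their fading marks are mutually independent.

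Next I would evaluate the two transforms. For the BS field, $\Phi_B$ is a PPP of density $\lambda_B$ on all of $\mathbb{R}^2$ with \emph{no} exclusion region (a BS may lie arbitrarily close to the reference UR), so the PGFL together with $\mathbb{E}[e^{-sh}]=1/(1+s)$ and a change of variables give $\mathcal{L}_{I_B}(\gamma_{th}r^{\alpha}/p_t)=\exp(-\pi\lambda_B\theta_B r^{2})$, which is where the integral $\int_0^{\infty}(1+u^{\alpha/2})^{-1}\mathrm d u$ in $\theta_B$ originates. For the co-channel UT field, the key observation is that inside group $m$ the maximum-BRP UT is just the \emph{nearest} group-$m$ UT, so conditioning on $d_m=r$ already forces every other group-$m$ UT, hence every interfering active UT of $\Phi'_m$, to lie outside the disk of radius $r$; applying the PGFL to a PPP of density $\rho_m c_m$ (Proposition~\ref{pro_21}), restricted to $[r,\infty)$, produces exactly the lower limit $\gamma_{th}^{-2/\alpha}$ in $\theta_I$ and yields $\mathcal{L}_{I_m}(\gamma_{th}r^{\alpha}/p_t)=\exp(-\pi\rho_m c_m\theta_I r^{2})$.

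Finally I would de-condition. From the proof of Proposition~\ref{pro_1}, the joint ``density'' of the event that the serving UT is from group $m$ with $d_m=r$ equals $2\pi c_m r\exp\!\big(-\pi r^{2}\sum_{i=1}^{M}c_i(B_i/B_m)^{2/\alpha}\big)$ for $r\in[0,R]$, so by total probability
\[
\mathcal{P}_m\,\mathbb{P}_{m}^{th}=2\pi c_m\int_0^{R}r\exp\Big(-\pi r^{2}\big({\textstyle\sum_{i=1}^{M}c_i(B_i/B_m)^{2/\alpha}+\lambda_B\theta_B+\rho_m c_m\theta_I}\big)\Big)\,\mathrm d r.
\]
The bracketed quantity times $\pi R^{2}$ is precisely $\varphi_m$, and the elementary integral $\int_0^{R}r e^{-Ar^{2}}\,\mathrm d r=(1-e^{-AR^{2}})/(2A)$ collapses the right-hand side to $\pi R^{2}c_m(1-e^{-\varphi_m})/\varphi_m$; dividing by $\mathcal{P}_m$ gives \eqref{pm2}.

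The hard part will be justifying the interference step for $I_m$: the event that the group-$m$ UT wins the BRP competition at distance $r$ not only empties the radius-$r$ disk of other group-$m$ UTs but also biases the point patterns of the remaining groups and is correlated with which UTs happen to be active. Treating $\Phi'_m$ as an \emph{independent} PPP of density $\rho_m c_m$ outside the radius-$r$ disk (reasonable to leading order, since out-of-group UTs use orthogonal frequencies and contribute nothing to $I_m$, and $\rho_m$ is itself an approximation from Proposition~\ref{pro_21}) is what keeps the integral in closed form; the rest is the routine bookkeeping of the substitutions needed to match the stated $\theta_I$ and $\theta_B$.
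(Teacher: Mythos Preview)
Your proposal is correct and follows essentially the same route as the paper: exponential fading turns the SIR event into a product of Laplace transforms, the PGFL of the PPP evaluates each (with the exclusion disk of radius $r$ for the co-channel UTs producing the lower limit $\gamma_{th}^{-2/\alpha}$ in $\theta_I$, and no exclusion for the BSs giving the full-line integral in $\theta_B$), and the resulting Gaussian-type integral in $r$ is closed against the conditional serving-distance density derived from Proposition~\ref{pro_1}. The only cosmetic difference is that the paper first isolates the conditional pdf $f_{R_m}(x)=\frac{1}{\mathcal{P}_m}2\pi c_m x\exp(-\pi x^2\sum_i c_i(B_i/B_m)^{2/\alpha})$ and integrates against it, whereas you carry the joint density and divide by $\mathcal{P}_m$ at the end; your closing remark about the independence approximation for $\Phi'_m$ is a point the paper leaves implicit.
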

\begin{proof}
	For simplicity, we define $I_{m} = \sum_{i \in {\Phi}'_m\setminus\{T^m_0\}} |h_{i}|^2 d_{i}^{-\alpha}$ and $Q =  \sum_{j \in {\Phi}_B} |h_{j}|^2 d_{j}^{-\alpha}$.
	To avoid confusion with Appendix A, when the serving UT is from group $m$, we denote $R_m$ as the distance between the serving UT and the reference UR.
	Then, we have
	\begin{eqnarray}
	&&\!\!\!\!\!\!\!\!\mathbb{P}_{m}^{th}[\text{SIR}_{m}>\gamma_{th}] \nonumber \\
	&&\!\!\!\!\!\!\!\!=\mathbb{P}_{m}^{th}\left[ |h_{m}|^2>
	\gamma_{th} R_m^{\alpha}  \left(I_{m} + \frac{p_B}{p_t} Q\right) \right] \nonumber\\
	&&\!\!\!\!\!\!\!\!\overset{(a)}{=} \int_{0}^{R} \mathbb{E}_{I_{m}} \left[\exp \left(-\gamma_{th} x^{\alpha}I_{m}\right) \right] \nonumber\\
	&&\qquad \mathbb{E}_{Q} \left[\exp \left(-\frac{p_B}{p_t} \gamma_{th} x^{\alpha}Q\right) \right] f_{R_m}(x) \mathrm d x \nonumber\\
	&&\!\!\!\!\!\!\!\!{=}\int_{0}^{R} \mathcal{L}_{I_m} (\gamma_{th} x^{\alpha}) \mathcal{L}_{Q} \left(\frac{p_B}{p_t}\gamma_{th} x^{\alpha}\right) f_{R_m}(x) \mathrm d x.  \label{Eq_Pth}
	\end{eqnarray}
	Step (a) follows that $|h_{m}|^2 \sim \exp(1)$, and $f_{R_m}(x)$ is the PDF of distance $R_m$.
	$\mathcal{L}_{I_{m}} (s) $ is the Laplace transformation of random variable $I_{m}$ evaluated at $s$ conditioned on the distance $R_m =x$.
	According to \cite{Andrews2011}, we have
	\begin{eqnarray}
	&&\!\!\!\!\!\!\!\!\mathcal{L}_{I_{m}} (\gamma_{th} x^{\alpha}) \nonumber\\
	&&\!\!\!\!\!\!\!\!= \mathbb{E}_{{\Phi}'_i} \left[ \prod_{i \in {\Phi}'_m} \mathbb{E}_{h_{i}} \left[\exp \left(-\gamma_{th} x^{\alpha} |h_{i}|^2 d_{i}^{-\alpha} \right) \right]\right]\nonumber\\
	&&\!\!\!\!\!\!\!\!\overset{(b)}{=}\exp\left( - 2\pi c_m\rho_m \int_{x}^{\infty} (1- \right.\\
	&&\left. \mathbb{E}_{h_{i}}[\exp(-\gamma_{th}x^{\alpha} |h_{i}|^2 v^{-\alpha})]) v \mathrm d v\right) \nonumber\\
	&&\!\!\!\!\!\!\!\! \overset{(c)}{=} \exp\left(- 2 \pi c_m \rho_m\int_{x}^{\infty}\frac{\gamma_{th}}{\gamma_{th}+ (\frac{v}{x})^\alpha} v  \mathrm d v \right) \nonumber\\
	&&\!\!\!\!\!\!\!\!\overset{(d)}{=}\exp\left( -\pi c_m\rho_m x^2\theta_I \right). \label{Eq_Iap}
	\end{eqnarray}
	Step (b) follows from the probability generating functional (PGFL) of the PPP.
	Step (c) is due to the distribution of channel gain, i.e. $|h_{j}|^2 \sim \exp(1)$.
	Step (d) is obtained by employing the change of variable $u= \left(\frac{v}{x\gamma_{th}^{1/\alpha}} \right)^2$, and
	$ \theta_I=\gamma_{th}^{\frac{2}{\alpha}} \int_{\gamma_{th}^{-\frac{2}{\alpha}}}^{\infty}\frac{1}{1+ u^{\frac{\alpha}{2}}} \mathrm d u $.
	
	Similarly, we can obtain
	\begin{equation}
	\mathcal{L}_{Q} \left(\frac{p_B}{p_t}\gamma_{th} x^{\alpha}\right)=
	\exp\left( -\pi \lambda_B x^2\theta_B \right), \label{Lq}
	\end{equation}
	where $\theta_B= {\left(\gamma_{th} \frac{p_B}{p_t}\right) }^{\frac{2}{\alpha}} \int_{0}^{\infty}\frac{1}{1+ u^{\frac{\alpha}{2}}} \mathrm d u $.
	
	To obtain the final expression, we still need to derive the PDF of $f_{R_m}(x)$.
	We first derive the probability of $R_m>x$ as
	\begin{eqnarray}
	&&\!\!\!\!\!\!\!\!\mathbb{P}[R_m>x] \nonumber \\
	&&\!\!\!\!\!\!\!\!= \frac{1}{\mathcal{P}_m}\mathbb{P}[R_m>x,P_m(R_m) > \max _{i \neq m} P_i (d_i) ]  \nonumber\\
	&&\!\!\!\!\!\!\!\! =\frac{1}{\mathcal{P}_m}\int_x^{\infty} \prod_{i\neq m}^M\mathbb{P}\left[ d_i > \left(\frac{B_i}{B_m}\right)^{\frac{1}{\alpha}}r\right] f_{d_m}(r) \mathrm{d}r \nonumber	\\
	&&\!\!\!\!\!\!\!\!\overset{(f)}{=}\frac{1}{\mathcal{P}_m}\!\!\int_x^{\infty}2\pi c_m r \exp\left(-\pi\sum_{i=1}^{M} c_i\left(\frac{B_i}{B_m}\!\!\right)^{\frac{2}{\alpha}}\!\!r^2\!\!\right)\mathrm{d}r,	
	\end{eqnarray}
	where $f_{d_m}(r)$ is given in (\ref{pdf}), and step (f) follows from (\ref{cdf}).
	Then, the pdf $f_{R_m}(x)$ is
	\begin{eqnarray}
	f_{R_m}(x)&&\!\!\!\!\!\!\!\!
	= \frac{\partial (1-\mathbb{P}[R_m>x]) }{\partial x} \nonumber \\
	&&\!\!\!\!\!\!\!\! = \frac{1}{\mathcal{P}_m}2\pi c_m x \exp\left(-\pi \sum_{i=1}^{M} c_i\left(\frac{B_i}{B_m}\right)^{\frac{2}{\alpha}} x^2\right) \label{Eq_RmPdf}.
	\end{eqnarray}
	
	Applying (\ref{Eq_Iap}), (\ref{Lq}) and (\ref{Eq_RmPdf}) into (\ref{Eq_Pth}) yields
	\begin{align}
	&\mathbb{P}_{m}^{th}[\text{SIR}_{m}>\gamma_{th}] \nonumber \\
	&=\frac{1}{\mathcal{P}_m} \!\!\int_{0}^{R} \!\!\!2\pi c_m x
	\exp\left(\!\!\!-\pi (\lambda_B \theta_B +\rho_mc_m \theta_I) x^2 \right. \nonumber \\
	&\left. \qquad\quad -\pi\sum_{i=1}^{M} c_i  \left(\frac{B_i}{B_m}\right)^{\!\!\!\frac{2}{\alpha}}\!\!\!x^2\right) \!\!\mathrm d x  \nonumber \\
	&=\frac{c_m}{\mathcal{P}_m\varphi_m} (1-\exp(-\pi R^2 \varphi_m)),
	\end{align}
	where $\varphi_m = \sum_{i=1}^{M} c_i \left(\frac{B_i}{B_m}\right)^{\!\!\!\frac{2}{\alpha}} + \lambda_B \theta_B +c_m\rho_m \theta_I$.
\end{proof}

Substituting (\ref{pm1}) and (\ref{pm2}) into (\ref{Eq_ps}), we have
\begin{equation}
\mathbb{P}_{s} = \pi R^2 \sum_{m=1}^M c_mf(\varphi_m), \label{Eq_Pss}
\end{equation}
where
\begin{equation}
f(t) = \frac{1-e^{-t}}{t}. \label{ft}
\end{equation}

Based on the above results, we now formulate the following optimization problem to maximize the D2D-aided offloading gain by optimizing the caching strategy
$\bm{c}= [c_1,c_2,\cdots,c_M]^T$.
\begin{subequations}\label{P1}
	\begin{align}
	\mathcal{P}1:\mathop{\max }_{\bm{c}}
	& \quad \mathcal{U} = \left(\lambda_0-\sum_{i=1}^Mc_i\right) \mathbb{P}_{s}  \label{Obj1}\\
	\textrm{s.t.} &\quad 0\leq c_m \leq \lambda_m , \textrm{ for all } m \in  \mathcal{M}.  \label{st1}
	\end{align}
\end{subequations}
where $\lambda_0 = \sum_{i=1}^M \lambda_i $, and (\ref{st1}) are the probability constraints.

\section{Optimization of Caching Strategy}

In this section, we first consider the unbiased case, where the optimal caching strategy and offloading gain is obtained.
However, Problem $\mathcal{P}1$ is not convex in general due to the sophisticated expression of D2D success probability $\mathbb{P}_{m}^{th}$ given in Proposition 2.
For the general case, we first propose a numerical searching algorithm to obtain the globally optimal solution.
To deal with the high computational complexity, we adopt an asymptotic approximation of $\mathbb{P}_{m}^{th}$ for the large D2D distance region.
Then, an iterative algorithm with low complexity is proposed to obtain the asymptotic caching strategy.

\subsection{Unbiased Case}

In the unbiased case, there is no trust bias for association, i.e., $B_m =B_k, \forall m,k \in \mathcal{M}$.
In this case, we define $x= \sum_{i=1}^M c_i$.
Then Problem $\mathcal{P}1$ is transformed into
\begin{subequations}\label{P2}
	\begin{align}
	\mathcal{P}2:\mathop{\max }_{x,\bm{c}}
	& \quad \mathcal{U}(x,\bm{c})= \left(\lambda_0-x\right)\mathbb{P}_{s}(\bm{c},x)  \\
	\textrm{s.t.} &\quad x= \sum_{i=1}^M c_i ,  \label{st2_2} \\
	&\quad 0 \leq x \leq \lambda_0,  \label{st2_3} \\
	&\quad 0 \leq c_m \leq \lambda_m , \textrm{ for all } m \in  \mathcal{M}.
	\end{align}
\end{subequations}
Problem $\mathcal{P}2$ is still non-convex, however, we can obtain the optimal solution for this case by the following procedures.
First, Problem $\mathcal{P}2$ is decomposed into the following two subproblems.
For given $x=\bar{x}$, we can find the optimal $c_m$ for each group by solving the following subproblem.
\begin{subequations}\label{P2.1}
	\begin{align}
	\mathcal{P}2.1:\mathop{\max }_{\bm{c}}
	& \quad \mathcal{F}(\bm{c}|x=\bar{x})= \sum_{m=1}^M c_mf(\varphi_m)  \label{p21Obj} \\
	\textrm{s.t.} &\quad \bar{x}= \sum_{i=1}^M c_i ,  \label{st21} \\
	&\quad 0 \leq c_m \leq \lambda_m , \textrm{ for all } m \in  \mathcal{M}.
	\end{align}
\end{subequations}
where function $f(x)$ is defined in $(\ref{ft})$.
Note that the active ratios of different groups are identical in this case, i.e. $\rho_m =\rho$ for all $m \in \mathcal{M}$ and
\begin{equation}
\rho= 1-\left(1+\frac{\lambda_0-\bar{x}}{3.5\bar{x}}E(\bar{x})\right)^{-3.5} \frac{\Gamma\left(3.5,\left(\lambda_0- \bar{x}+\frac{3.5\bar{x}}{E(\bar{x})}\right)\pi R^2\right)}{\Gamma\left(3.5, \frac{3.5\bar{x}}{E(\bar{x})}\pi R^2\right)},
\end{equation}
where $E(\bar{x}) = 1-\exp(-\pi R^2\bar{x})$.
Consequently,  $\varphi_m$ in this case is simplified to
$\varphi_m = \pi R^2(\bar{x} + \lambda_B \theta_B+c_m \rho \theta_I)$.

Then, we need to find the optimal caching summation $x$ from the following subproblem
\begin{subequations}\label{P2.2}
	\begin{align}
	\mathcal{P}2.2:\mathop{\max }_{x}
	& \quad \left(\lambda_0-x\right) \mathcal{F}^*(\bm{c}^*|x)  \\
	\textrm{s.t.} &\quad 0 \leq x \leq \lambda_0.
	\end{align}
\end{subequations}

{
Each value of $\bar{x} \in [0,\lambda_0]$ corresponds to a maximal objective value of Problem $\mathcal{P}2.1$, which is denoted as $\mathcal{F}^*(\bm{c}^*|\bar{x})$.
If we can determine the optimal $x^*$, the optimal solution to the original Problem $\mathcal{P}2$ is readily obtained.
Then the optimal $\bar{x}$ is given by
\begin{equation}
x^* = \arg \max_{\bar{x} \in [0,\lambda_0]} \mathcal{F}^*(\bm{c}^*|\bar{x})
\end{equation}
Given the fact that Problem $\mathcal{P}2$ is nonconvex, one-dimension exhaustive search is conducted in the interval $[0,\lambda_0]$ to find $x^*$.
With a small enough searching step size, the global optimal solution $x^*$ can be obtained.}
Hence, in the rest of this section, we focus on solving Problem $\mathcal{P}2.1$.

We first prove that Problem $\mathcal{P}2.1$ is a convex optimization problem.
Since the constraints of Problem $\mathcal{P}2.1$ are linear, we only need to show that the objective function (\ref{p21Obj}) is concave.
The Hessian matrix of function $\mathcal{F}(\bm{c}|x=\bar{x})$ is denoted by $\bf{H}$.
Let $h_{m,n}$ represents the $(m, n)$th element of $\bf{H}$, and we have
\begin{equation} \label{hmn}
h_{m,n} = \left\{
\begin{array}{cl}
\theta_I\rho \pi R^2 (2f'(\varphi_m) + c_m \theta_I\rho \pi R^2 f''(\varphi_m) ) &  \text{if } m=n, \\
0 &  \text{else.}
\end{array}
\right.
\end{equation}

For function $f(t)$ defined in (\ref{ft}), we have
\begin{equation}
f'(t) =  \frac{e^{-t}(t - e^t + 1)}{t^2}, f''(t) =  \frac{e^{-t}(-2 + 2 e^t - 2 t - t^2)}{t^3}. \label{fDiv}
\end{equation}
Note that $e^t  = 1+t+ \frac{1}{2}t^2+ \cdots$ and $e^{-t}  = 1-t+ \frac{1}{2}t^2- \cdots$, so $f'(t) <0$ and $f''(t) >0$ hold.
Moreover, $c_m \theta_I\rho \pi R^2 < \varphi_m$ holds due to the definition of $\varphi _m$. Then we have
\begin{multline}
2f'(\varphi_m) + c_m \theta_I\rho \pi R^2 f''(\varphi_m)  \leq 2f'(\varphi_m) + \varphi_m f''(\varphi_m)\\
= \frac{-e^{-\varphi_m}{\varphi_m}^2}{{\varphi_m}^3} <0 \label{diff2rd}.
\end{multline}
Consequently, the Hessian matrix $\bf{H}$ is negative definite.

Since Problem $\mathcal{P}2.1$ is convex, we can obtain the globally optimal solution by solving the KKT conditions. The Lagrangian of problem $\mathcal{P}2.1$ is
\begin{equation}
\mathcal{L}_{0}(\bm{c}, \eta) =  -\sum_{m=1}^M c_m f(\varphi_m)+ \eta( \sum_{i=1}^M c_i-\bar{x}),
\end{equation}
where $\eta$ is the dual variable associated with constraint (\ref{st21}).
Then, we have the following KKT conditions as
\begin{eqnarray}
\frac{\partial \mathcal{L}}{\partial c_m} = -f(\varphi_m) - \theta_I\rho \pi R^2c_mf'(\varphi_m) + \eta=0,&\!\!\!\!\!\!\forall m \in \mathcal{M}  \\
\eta( \sum_{i=1}^M c_i-\bar{x})=0.&\label{kktp2}
\end{eqnarray}

Define function $g(c_m)$ as
\begin{equation}
g(c_m)= f(\varphi_m(c_m)) +  \theta_I \rho\pi R^2c_mf'(\varphi_m(c_m)).\label{gcm}
\end{equation}
According to (\ref{fDiv}), we have $\theta_I\rho \pi R^2c_mf'(\varphi_m) \geq \varphi_mf'(\varphi_m)$.
Consequently, we have
\begin{equation}
g(c_m) \geq f(\varphi_m(c_m)) +  \varphi_m f'(\varphi_m(c_m)) =e^{\varphi_m} >0.
\end{equation}

Then, it is inferred that $\eta >0$.
In addition, $g(c_m)$ is a monotonically decreasing function with respect to $c_m$ according to (\ref{diff2rd}).
As a result, $-g(x) +\eta =0$ has one root, which is denoted by $C_0(\eta)$.
The optimal solution is given by
\begin{equation}
c^*_m = [C_0(\eta)]_0^{\lambda_m}, \label{cmNB}
\end{equation}
where $[x]_a^b = \max\{a,\min\{x,b\}\}$.

However, we still need to obtain the value of $C_0(\eta)$.
Since $-g(x) +\eta =0$ is a transcendental equation, it is hard to directly obtain $C_0(\eta)$ as an explicit function of
$\eta$.
However, we can still determine the value of $C_0(\eta)$ by substituting (\ref{cmNB}) into (\ref{kktp2}).
Without loss of generality, the groups are sorted according to the increasing order of $\lambda_m$, i.e. $\lambda_1 \leq \lambda_2 \leq \cdots \lambda_M$.
Then, we have
\begin{equation}
C_0(\eta) =
\left\{\begin{matrix}
\frac{\bar{x}}{M} & \text{if } \frac{\bar{x}}{M} \leq \lambda_1, \\
\frac{\bar{x}-\sum_{i =1}^n \lambda_i}{M-n} &\!\!\!\text{if} \lambda_n < \frac{\bar{x}- \sum_{i =1}^n \lambda_i}{M-n} \leq \lambda_{n+1},
\end{matrix}\right. \label{copt}
\end{equation}
where $n = 1,\cdots, M-1$.

We can see that $C_0(\eta)$ is highly dependent on  $\bar{x}$ and the interested user density $\lambda_m$ in each group.
When there is no social trust bias, (\ref{copt}) shows that the reference content should be cached as evenly as possible across user groups.
With $\bm{c}^*$ for given $\bar{x}$,  the optimal $x^*$ can be found by a one-dimension search over interval $[0,\lambda_0]$.

{In this case, the computational complexity is dominated by the one-dimensional search for finding the optimal $x^*$.
Therefore, the total computational complexity of finding optimal solutions is $\mathcal{O}\left( \left \lfloor \frac{\lambda_0}{\delta_x} \right \rfloor \right)$, where $\delta_x$ is the searching stepsize and $\left \lfloor \cdot \right \rfloor$ represents rounding down. }

\subsection{General Case}

In the general case, due to the sophisticated $\mathbb{P}_{s}$ given in (\ref{Eq_Pss}), the objective function $\mathcal{U}$ of Problem $\mathcal{P}1$ is not concave.
{For the simplicity of presentation, $v_m$ is in introduced to represent ${B_m}^{\frac{2}{\alpha}}$, i.e., $v_m = {B_m}^{\frac{2}{\alpha}}$, which is the weight of $c_m$.
Problem $\mathcal{P}1$ is nonconvex in general, but we can still obtain its globally optimal solution by the following procedure. }

{
To transform the original Problem $\mathcal{P}1$ into a convex form,
we first define
$$x= \sum_{m=1}^M c_m,y = \sum_{m=1}^M v_mc_m,$$
where $x$ and $y$ represent the summation of UTs' density in each group and the weighted sum of the UTs' density, respectively.}

For given $x= \bar{x}$ and $y= \bar{y}$, Problem $\mathcal{P}1$ is simplified to
\begin{subequations}\label{P01}
	\begin{align}
	\mathcal{P}3: \mathop{\max }_{\bm{c}}
	& \quad  \mathcal{F}(\bm{c}|\bar{x},\bar{y}) = \sum_{i=1}^M c_m f(\varphi_m(c_m) ) \label{p01Obj} \\
	&\quad \sum_{m=1}^M c_m = \bar{x},  \label{p01st1} \\
	&\quad \sum_{m=1}^M v_mc_m = \bar{y}, \label{p01st2}\\
	&\quad 0 \leq c_m \leq \lambda_m , \textrm{ for all } m \in  \mathcal{M}.
	\end{align}
\end{subequations}

In this case, with given $\bar{x}$ and $\bar{y}$, the active ratio $\rho_m$ has been determined as
\begin{multline}
\rho_m = 1-\left(1+\frac{v_m(\lambda_0-\bar{x})}{3.5\bar{y}} E_m(\bar{y}) \right)^{-3.5} \\ \frac{\Gamma\left(3.5,\left(\lambda_0-\bar{x}+\frac{3.5\bar{y}}{v_mE_m(\bar{y})}\right)\pi R^2\right)}
{\Gamma\left(3.5, \frac{3.5\bar{y}}{v_mE_m(\bar{y})}\pi R^2\right)},
\end{multline}
where $E_m(\bar{y}) = 1-\exp\left(-\pi R^2\frac{\bar{y}}{v_m}\right)$.

Consequently, we have
\begin{equation}
\varphi_m(c_m) =\pi R^2 \left(\frac{\bar{y}}{v_m}+ \lambda_B \theta_B +\rho_mc_m \theta_I\right)
\end{equation}
It is easy to know that the Hessian matrix of objective function (\ref{p01Obj}) is negative definite, which is similar to (\ref{hmn}) and omitted here.
Consequently, objective function (\ref{p01Obj}) is concave and Problem $\mathcal{P}3$ is a convex problem.

The optimal solution $\bm{c^*}=[c^*_1, \cdots, c^*_M]^T$ to Problem $\mathcal{P}3$ can be found by the gradient projection method \cite{bertsekas1999nonlinear}.
Define $\bm{v} =[v_1, \cdots, v_M]^T$, and we rewrite the equality constraints (\ref{p01st1}) and (\ref{p01st2}) as
\begin{equation}
\bm{N}\bm{c} = \bm{b},
\end{equation}
where $\bm{N}$ is the $2\times M$ coefficient matrix of the equality constraints, i.e., $\bm{N}= [\bm{1}_M,\bm{v}]^T$, and $\bm{b} = [\bar{x}, \bar{y}]^T$.
Then, the gradient projection update direction of $\bm{c}$ in $t$-th step is
\begin{equation}
\bm{p^{(t)}} = [\bm{I}_M-\bm{N}^T(\bm{NN}^T)^{-1}\bm{N}]\bm{g^{(t)}}, \label{dic}
\end{equation}
where $\bm{g^{(t)}}$ is the $M \times 1$ gradient vector, and the $m$-th element of $\bm{g^{(t)}}$ is $g(c_m^{(t)})$ given in (\ref{gcm}).
The step length $s^{(t)}$ in the direction $\bm{p^{(t)}}$ is determined by the following searching problem.
\begin{subequations}\label{stp}
	\begin{align}
	\mathop{\max }_{s^{(t)}}& \quad  \mathcal{F}(\bm{c} +s^{(t)}\bm{p^{(t)}} |\bar{x},\bar{y}) \\
	&\quad  \bm{0} \leq \bm{c} +s^{(t)}\bm{p^{(t)}} \leq \bm{\lambda}, \\
	&\quad  s^{(t)} \geq 0,
	\end{align}
\end{subequations}
where $\bm{\lambda} = [\lambda_1, \cdots, \lambda_M]$, and the optimal value of $s^{(t)}$ can be found by the bisection method since the objective function is increasing with respect to $s^{(t)}$.

{For any given $(\bar{x},\bar{y})$, the optimal caching density and a corresponding optimal objective can be obtained by solving Problem $\mathcal{P}3$.
Then, the optimal $(\bar{x}^*,\bar{y}^*)$ is given by
\begin{equation}
(x^*,y^*) = \arg \max_{\{\bar{x}, \bar{y}\}} \mathcal{F}^*(\bm{c}^*|\bar{x},\bar{y})
\end{equation}
However, the optimal value of $(x^*,y^*)$ cannot be analytically derived, so that the two-dimensional search is conducted.
With a small enough searching step size, the global optimal solution $(x^*,y^*)$ can be obtained.}
To reduce searching complexity, for given $x=\bar{x}$, the searching interval $[y_{min},y_{max}]$ of $y$ can be determined as
\begin{eqnarray}
y_{d}(\bar{x})&=&\min\{v_m\}\bar{x}, \\
y_{u}(\bar{x})&=&\min\left\{\max\{v_m\}\bar{x},\varLambda \right\},
\end{eqnarray}
where $\varLambda=\sum_{m=1}^M v_m \lambda_m$.

\begin{algorithm}
	\caption{ Algorithm to obtain the global optimum to Problem $\mathcal{P}1$}
	\begin{algorithmic}[1]\label{alg0}
		\REQUIRE  Searching stepsizes $\delta_x$ and $\delta_y$, and convergence condition $\delta_c$.
		\FOR{$\bar{x} = 0, \delta_x, 2\delta_x, \cdots, \left \lfloor \frac{\lambda_0}{\delta_x}\right \rfloor\delta_x$}
		\FOR{$\bar{y} = y_{d}(\bar{x}), \delta_y, 2\delta_y, \cdots, \left \lfloor \frac{y_{u}(\bar{x})-y_{d}(\bar{x})}{\delta_y} \right \rfloor\delta_y$}
		\STATE Initialize $t=0$, and $\bm{c}^{(0)}$ satisfying constraints (\ref{p01st1}) and (\ref{p01st2});
		\REPEAT
		\STATE Update the direction $\bm{p^{(t)}}$ and step length $s^{(t)}$ according to (\ref{dic}) and (\ref{stp}), respectively;
		\STATE Update $\bm{c}^{(t+1)} = \bm{c}^{(t)} + s^{(t)}\bm{p^{(t)}}$;
		\STATE Update $\mathcal{F}(\bm{c}^{(t+1)}|\bar{x},\bar{y})$;
		\STATE $t = t +1$;	
		\UNTIL  $\mathcal{F}(\bm{c}^{(t)}|\bar{x},\bar{y}) - \mathcal{F}(\bm{c}^{(t-1)}|\bar{x},\bar{y}) \leq \delta_c$;
		\STATE Update $U(\bar{x},\bar{y})=\left(\lambda_0-\bar{x}\right) \mathcal{F}(\bm{c}^{(t)}|\bar{x},\bar{y})$;
		\ENDFOR
		\ENDFOR
		\ENSURE   $\max_{(\bar{x},\bar{y}) \in \mathbb{Z}} U(\bar{x},\bar{y})$
	\end{algorithmic}
\end{algorithm}

The above analysis is summarized in the following Algorithm \ref{alg0}.

{The computational complexity of Algorithm \ref{alg0} mainly involves two parts:
 the gradient projection method and the two-dimensional search for optimal $(x^*,y^*)$.
For one execution of the gradient projection method, the complexity is dominated by the update of direction $\bm{p^{(t)}}$ in (\ref{dic}) in step 5, which needs about
$\frac{2}{3}2^3+ 8M2^2+8M2^3+2M^2=\mathcal{O}(3M^2+192M)$ flops (floating-point operations).
In addition, according to \cite{bertsekas1999nonlinear}, to get within  an $\varepsilon$-neighborhood of the optimal objective value, the iteration number of the gradient projection method is $\mathcal{O}(1/\varepsilon)$.
Meanwhile, for the two-dimensional search,	the iteration number is $\mathcal{O}\left(\left \lfloor \frac{\lambda_0}{\delta_x}\right \rfloor \left \lfloor \frac{\varLambda}{\delta_y}\right \rfloor\right)$ with step sizes $\delta_x$ and $\delta_y$ for $x$ and $y$, respectively.
Then, the total complexity of Algorithm \ref{alg0} is
$\mathcal{O}\left((3M^2+192M)\frac{1}{\varepsilon} \left \lfloor \frac{\lambda_0}{\delta_x}\right \rfloor \left \lfloor \frac{\varLambda}{\delta_y}\right \rfloor\right)$.}

Although the global optimum can be found by Algorithm \ref{alg0}, note that Algorithm \ref{alg0} has a high computational complexity.
Therefore, in the following, we consider an asymptotic approximation to obtain a low-complexity solution.

\subsection{Asymptotic Case}
Since the effect of transmission distance dependent path-loss has already been included in the D2D success probability, we consider the asymptotic D2D success probability with the relaxation of D2D cooperative distance limitation, i.e., $R \to \infty$.
The simulation results have verified this approximation by checking the gaps between the derived D2D success probability and the asymptotic case.

The asymptotic D2D success probability denoted as $\mathbb{P}_{s}^{\infty}$ is given by
\begin{equation}
\mathbb{P}_{s}^{\infty} = \sum_{m=1}^M \frac{c_mv_m}{\sum_{i=1}^{M} c_i v_i + \lambda_B \theta_Bv_m +c_m \rho_m^{\infty}\theta_I v_m}. \label{PssInfty}
\end{equation}

Similarly, we define $x = \sum_{m =1}^M c_m$.
However, in this case, for given $x = \bar{x}$, the active ratio $\rho_m^{\infty}$ is given by
\begin{equation}
\rho_m^{\infty} = 1-\left(1+\frac{v_m(\lambda_0-\bar{x})}{3.5\sum_{i=1}^M{v_ic_i}} \right)^{-3.5}, \quad m \in \mathcal{M}. \label{rho}
\end{equation}
Considering this sophisticated expression given in (\ref{rho}) makes $\mathbb{P}_{s}^{\infty}$ intractable, we introduce the following upper bound for the active ratio $\rho_m$ in each group as
\begin{equation}
\overline{\rho}_m= 1-\left(1+\frac{v_m(\lambda_0-x)}{3.5y^{\min}}  \right)^{-3.5},
\end{equation}
where $y^{\min}$ is the optimal objective value of the following Problem (\ref{hFunPro}).
\begin{subequations}\label{hFunPro}
	\begin{align}
	\mathop{\min }_{\bm{c}}& \quad y=\sum_{i=1}^M{v_ic_i} \\
	\textrm{s.t.} &\quad \sum_{m =1}^M c_m = \bar{x},  \\
	&\quad 0 \leq c_m \leq \lambda_m , \textrm{ for all } m \in  \mathcal{M}.
	\end{align}
\end{subequations}
It is easy to see that Problem (\ref{hFunPro}) is a linear programming, which can be solved by the standard linear programming method, such as the simplex method \cite{boyd2004convex}.
Since $\mathbb{P}_{s}^{\infty}$ decreases as $\rho_m^{\infty}$ increases, substituting $\overline{\rho}_m$ into (\ref{PssInfty}) yields a lower bound of $\mathbb{P}_{s}^{\infty}$ as
\begin{equation}
\underline{\mathbb{P}^{\infty}_{s}} = \sum_{m=1}^M \frac{c_mv_m}{\sum_{i=1}^{M} c_i v_i + \lambda_B \theta_Bv_m +c_m\overline{\rho}_m\theta_I v_m}.
\end{equation}
Consequently, Problem $\mathcal{P}1$ in this case is simplified into
\begin{subequations}\label{P31}
	\begin{align}
	\mathcal{P}4:\mathop{\max }_{\bm{c}}
	& \quad \mathcal{U}^{\infty} = \left(\lambda_0-\sum_{i=1}^Mc_i\right) \underline{\mathbb{P}^{\infty}_{s}} \\
	\textrm{s.t.} &\quad 0\leq c_m \leq \lambda_m , \textrm{ for all } m \in  \mathcal{M}.
	\end{align}
\end{subequations}
The performance gap between $\mathbb{P}_{s}$ in (\ref{Eq_Pss}) and $\mathbb{P}_{s}^{\infty}$, and the gap between the optimal offloading gain of Problem $\mathcal{P}1$ and the optimal $\mathcal{U}^{\infty}$ of Problem $\mathcal{P}4$ will be studied by simulations.

By fixing $x= \bar{x}$, Problem $\mathcal{P}4$ is simplified to the following Problem $\mathcal{P}4.1$ and the optimal value of $x$ can be found by one-dimension search over interval $[0,\lambda_0]$.
\begin{subequations}
	\begin{align}
	\mathcal{P}4.1:\mathop{\max }_{\bm{c}}
	& \quad   \sum_{m=1}^M \frac{c_mv_m}{\phi_m(\bm{c})} \label{p31Obj} \\
	\textrm{s.t.} &\quad \bar{x}= \sum_{i=1}^M c_i ,  \label{st31} \\
	&\quad 0 \leq c_m \leq \lambda_m , \textrm{ for all } m \in  \mathcal{M},
	\end{align}
\end{subequations}
where $\phi_m(\bm{c})= \sum_{i=1}^{M} c_i v_i + \lambda_B \theta_Bv_m +\overline{\rho}_mc_m \theta_I v_m$.

The objective of $\mathcal{P}4.1$ is to maximize the summation of fractional functions, so that Problem $\mathcal{P}4.1$ is the non-convex sum-of-ratios optimization \cite{7983007,6680785}.
Dinkelbach method is used to solve the optimization problem with single fractional function, and it cannot be directly applied here\cite{7762065}.
We first transform the Problem $\mathcal{P}4.1$ into the following equivalent form as
\begin{subequations}\label{P3.2}
	\begin{align}
	\mathop{\max }_{\bm{c},\beta}
	& \quad \sum_{m=1}^M \beta_m  \\
	\textrm{s.t.} &\quad \frac{c_mv_m}{ \phi_m(\bm{c})} \geq \beta_m  \textrm{ for all } m \in  \mathcal{M}, \label{st_p4}\\
	&\quad \bar{x}= \sum_{i=1}^M c_i , \\
	&\quad 0 \leq c_m \leq \lambda_m , \textrm{ for all } m \in  \mathcal{M}.
	\end{align}
\end{subequations}

Then, we have the following proposition.
\begin{proposition}\label{pro_3}
	If $(\bm{c}^*, \bm{\beta}^*)$ is the solution to Problem $(\ref{P3.2})$ that satisfies the KKT conditions, then there exists $\bm{u}^*=[u^*_1,\cdots, u^*_M]$ such that $\bm{c}^*$ is the optimal solution to following problem for $\bm{\beta}=\bm{\beta}^*$ and $\bm{u}=\bm{u}^*$.
	\begin{subequations}\label{P4}
		\begin{align}
		\mathop{\max }_{\bm{c}}
		& \quad \mathcal{G}(\bm{c})= \sum_{m=1}^M u_m(c_mv_m- \beta_m \phi_m(\bm{c}))  \\
		\textrm{s.t.} &\quad \bar{x}= \sum_{i=1}^M c_i ,  \label{st_p4eq} \\
		&\quad 0 \leq c_m \leq \lambda_m , \textrm{ for all } m \in  \mathcal{M}, \label{stp4}
		\end{align}
	\end{subequations}
	where $\bm{\beta}=[ \beta_1, \cdots, \beta_M]$.
	$\bm{c}^*$ also satisfies the following equations for $\bm{\beta}=\bm{\beta}^*$ and $\bm{u}=\bm{u}^*$.
	\begin{equation}
	u_m = \frac{1}{\phi_m(\bm{c})}, c_mv_m- \beta_m \phi_m(\bm{c}) =0, \quad \forall m =1,\cdots,M. \label{Pro4Eq}
	\end{equation}
\end{proposition}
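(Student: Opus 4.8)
The plan is to read off the multipliers $\bm{u}^*$ and the identities (\ref{Pro4Eq}) directly from the KKT system of Problem (\ref{P3.2}), transfer that stationarity to Problem (\ref{P4}), and then invoke the fact that Problem (\ref{P4}) is, for fixed $x=\bar{x}$, a linear program, so that its KKT conditions are already sufficient for global optimality. First I would rewrite each inequality (\ref{st_p4}) in the equivalent product form $\beta_m\phi_m(\bm{c})-c_mv_m\le 0$ (legitimate since $\phi_m(\bm{c})\ge\lambda_B\theta_Bv_m>0$) and form the Lagrangian of Problem (\ref{P3.2}) with multipliers $\xi_m\ge 0$ on these $M$ constraints, $\mu$ on $\sum_ic_i=\bar{x}$, and $\underline{\nu}_m,\overline{\nu}_m\ge 0$ on $0\le c_m\le\lambda_m$. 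Stationarity in $\beta_m$ gives $\xi_m^*\phi_m(\bm{c}^*)=1$, hence $\xi_m^*=1/\phi_m(\bm{c}^*)>0$; setting $u_m^*:=\xi_m^*$ is the first relation in (\ref{Pro4Eq}). Because $\xi_m^*>0$, complementary slackness forces the constraint (\ref{st_p4}) to be active at $\bm{c}^*$, i.e. $c_m^*v_m-\beta_m^*\phi_m(\bm{c}^*)=0$, the second relation in (\ref{Pro4Eq}).

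Next I would compare stationarity in $c_m$. For Problem (\ref{P3.2}) it reads
\[
\sum_{k=1}^M \xi_k^*\beta_k^*\,\frac{\partial \phi_k}{\partial c_m}(\bm{c}^*)-\xi_m^*v_m-\underline{\nu}_m+\overline{\nu}_m+\mu=0,
\]
while for Problem (\ref{P4}), with $\bm{\beta}=\bm{\beta}^*$ and $\bm{u}=\bm{u}^*$, the gradient of $\mathcal{G}$ in $c_m$ equals $u_m^*v_m-\sum_k u_k^*\beta_k^*\,\partial\phi_k/\partial c_m$, so its $c_m$-stationarity is the same equation, term by term, once the very same $\mu,\underline{\nu}_m,\overline{\nu}_m$ are reused as multipliers. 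The key point is that the partial derivatives $\partial\phi_k/\partial c_m$ are identical expressions whichever problem one differentiates, so no new relations are introduced. Primal feasibility of $\bm{c}^*$ for (\ref{st_p4eq})--(\ref{stp4}) and complementary slackness on the box constraints are inherited verbatim; hence $\bm{c}^*$ together with $(\mu,\underline{\nu}_m,\overline{\nu}_m)$ satisfies the full KKT system of Problem (\ref{P4}).

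Finally, for fixed $\bar{x}$ the quantities $\overline{\rho}_m$ are constants, so each $\phi_m$ is affine in $\bm{c}$ and $\mathcal{G}(\bm{c})=\sum_m u_m^*\big(c_mv_m-\beta_m^*\phi_m(\bm{c})\big)$ is an affine function; Problem (\ref{P4}) is therefore a linear program over the polytope (\ref{st_p4eq})--(\ref{stp4}), for which any KKT point is a global maximizer. This yields that $\bm{c}^*$ is optimal for Problem (\ref{P4}) with $\bm{\beta}=\bm{\beta}^*,\bm{u}=\bm{u}^*$, and (\ref{Pro4Eq}) has already been obtained. I expect the only real obstacle to be bookkeeping of the coupling: each $\phi_k$ contains the shared term $\sum_ic_iv_i$, so $\partial\phi_k/\partial c_m\ne 0$ for every pair $(k,m)$, and one must check that these cross terms enter both stationarity equations identically so that a single choice of $(\mu,\underline{\nu}_m,\overline{\nu}_m)$ serves both problems; constraint qualification is not an issue, since a KKT point for (\ref{P3.2}) is assumed in the hypothesis.
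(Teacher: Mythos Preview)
Your proposal is correct and follows essentially the same route as the paper: rewrite (\ref{st_p4}) in product form using $\phi_m>0$, read off $u_m^*=1/\phi_m(\bm{c}^*)$ from stationarity in $\beta_m$, force activity of (\ref{st_p4}) by complementary slackness, transfer the $c_m$-stationarity verbatim to Problem (\ref{P4}), and conclude by the convexity (linearity) of (\ref{P4}). If anything, your treatment is tidier than the paper's, since you explicitly carry the box-constraint multipliers $\underline{\nu}_m,\overline{\nu}_m$ through both KKT systems, whereas the paper's Lagrangian omits them and handles the box constraints only implicitly.
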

\begin{proof}
	Since $\phi_m(\bm{c})>0$, the constraint (\ref{st_p4}) is equivalent to
	\begin{equation}
	c_mv_m- \beta_m \phi_m(\bm{c}) \geq 0 \label{st_pp4}.
	\end{equation}
	The Lagrangian for Problem $(\ref{P3.2})$ is
	\begin{multline}
	\mathcal{L}(\bm{c}, \bm{\beta}, \bm{u}, \eta)
	= \sum_{m=1}^M \beta_m - \\
	\sum_{m=1}^M u_m (\beta_m \phi_m(\bm{c})-c_mv_m) -\zeta(\sum_{i=1}^M c_i -\bar{x}),
	\end{multline}
	where $\{u_m\}$ and $\zeta$ are the non-negative Lagrangian multiplexers associated with constraint (\ref{st_pp4}) and (\ref{st_p4eq}), respectively.
	
	Based on KKT optimality conditions, if $(\bm{c}^*, \bm{\beta}^* )$ is the optimal solution of Problem $(\ref{P3.2})$, then there exists $\bm{u}^*$ and $\zeta^*$ such that
	\begin{eqnarray}
	\!\!\!\!\!\!\!-\sum_{i =1}^M  \beta_i^* u_i^*v_i - u_m^*\beta_m^*v_m\theta_I+u_m^*v_m -\zeta^* =0,&\!\!\!\!\!\forall m \in \mathcal{M}   \label{KT1} \\
	\frac{\partial \mathcal{L}}{\partial \beta_m} =1-u_m^* \phi_m(\bm{c}^*) =0,&\!\!\!\!\!\forall m \in \mathcal{M}  \label{KT2}\\
	u_m^* (\beta_m^* \phi_m(\bm{c}^*)-c_m^*v_m)=0,&\!\!\!\!\!\forall m \in \mathcal{M} \label{KT3} \\
	\sum_{i=1}^M c_i^* -\bar{x}=0,&\label{KT4}  \\
	c_m^*v_m- \beta_m^* \phi_m(\bm{c}^*)\leq 0,&\!\!\!\!\!\forall m \in \mathcal{M} \label{KT5} \\
	0 \leq c_m^* \leq \lambda_m,&\!\!\!\!\!\forall m \in \mathcal{M}  \label{KT6} \\
	u_m^* \geq 0,&\!\!\!\!\!\forall m \in \mathcal{M}.  \label{KT7}
	\end{eqnarray}
	
	First of all, according to (\ref{KT2}), $u_m^* >0$ for all $m\in \mathcal{M}$ due to the fact that $\phi_m(\bm{c}) >0$  for all $m\in \mathcal{M}$.
	As a result, from (\ref{KT2}), (\ref{KT3}) and  (\ref{KT5}), we have
	\begin{equation}
	u_m^* = \frac{1}{\phi_m(\bm{c}^*)}, c_m^*v_m- \beta_m^* \phi_m(\bm{c}^*)=0.
	\end{equation}
	Equation (\ref{Pro4Eq}) in Proposition \ref{pro_3} is then validated.
	
	Furthermore, given $\bm{\beta}=\bm{\beta}^*$ and $\bm{u}=\bm{u}^*$, it can be seen that  (\ref{KT1}), (\ref{KT4}), (\ref{KT7}) and (\ref{KT7}) are the KKT conditions for Problem (\ref{P4}).
	For $u_m >0,\beta_m >0, \forall m\in \mathcal{M}$, Problem (\ref{P4}) is a linear programming, which is convex.
	Consequently, the KKT conditions are also sufficient conditions of the optimal solution.
	In other words, $\bm{c}^*$ is the optimal solution to Problem (\ref{P4}) with $\bm{\beta}=\bm{\beta}^*$ and $\bm{u}=\bm{u}^*$.
\end{proof}

\begin{algorithm}
	\caption{ Iterative Algorithm to solve Problem $\mathcal{P}4.1$}
	\begin{algorithmic}[1]\label{alg1}
		\STATE Initialize $\bm{c}^{(0)}$ such that it satisfies the constraints (\ref{st_p4eq}), (\ref{stp4}).
		\STATE Set $t=0$, $\zeta \in (0,1)$ and $\epsilon \in (0,1)$.
		\STATE Initialize $\bm{\beta}^{(0)}, \bm{u}^{(0)}$ as
		\begin{align}
			&u_m^{(0)}=\frac{1}{\phi_m(\bm{c}^{(0)})},\\
			&\beta_m^{(0)}=\frac{c_m^{(0)} v_m}{\phi_m(\bm{c}^{(0)})}, \quad \forall m\in \mathcal{M};
		\end{align}
		
		\REPEAT
		\STATE Update ${\bm{c}^*}^{(t+1)}= \arg \underset{\bm{c}}{\max} \mathcal{G}(\bm{c}|\bm{u}^{(t)},\bm{\beta}^{(t)})$;
		\STATE Update $u_m^{(t+1)}$ and $\beta_m^{(t+1)}$ as
		\begin{eqnarray}
		&&\!\!\!\!\!\!\!\!\!u_m^{(t+1)} =  u_m^{(t)} - \zeta^{i^{(t+1)}} \frac{\chi_m(u_m^{(t)})}{\phi_m({\bm{c}^*}^{(t+1)})},\forall m \in \mathcal{M}, \\
		&&\!\!\!\!\!\!\!\!\!\beta_m^{(t+1)}= \beta_m^{(t)} - \zeta^{i^{(t+1)}} \frac{\kappa_m(\beta_m^{(t)})}{\phi_m({\bm{c}^*}^{(t+1)})},\forall m \in \mathcal{M};
		\end{eqnarray}
		where $i^{(t+1)}$ is the smallest integer among $i \in \{1,2,3,\cdots \}$satisfying
		\begin{align}
		&\sum_{m =1}^M \left|\chi_m\left(u_m^{(t)}-\zeta^{i} \frac{\chi_m(u_m^{(t)})}{\phi_m({\bm{c}^*}^{(t+1)})}\right)\right|^2
		\nonumber \\
		&+\sum_{m =1}^M \left|\kappa_m\left(\beta_m^{(t)}-\zeta^{i}
		\frac{\kappa_m(\beta_m^{(t)})}{\phi_m({\bm{c}^*}^{(t+1)})}\right)\right|^2 \nonumber \\
		&\leq \left(1-\epsilon\zeta^{i}\right)^2\left( \sum_{m =1}^M \left|\chi_m(u_m^{(t)})\right|^2 + \sum_{m =1}^M \left|\kappa_m(\beta_m^{(t)})\right|^2\right) ;
		\end{align}
		\STATE  Update $t = t +1$;
		\UNTIL  the following condition are satisfied.
		\begin{eqnarray}
		&&-1 + u_m^{(t)}\phi_m(\bm{c}^{(t)}) =0,\\
		&&\beta_m^{(t)} \phi_m(\bm{c}^{(t)}) -c_m^{(t)}v_m = 0,\forall m \in \mathcal{M};
		\end{eqnarray}
		\ENSURE ${\bm{c}^*}^{(t)}$, $\bm{\beta}^{(t)}$, and $\bm{u}^{(t)}$.
	\end{algorithmic}
\end{algorithm}

According to Proposition \ref{pro_3}, Problem (\ref{P3.2}) can be converted to Problem $(\ref{P4})$ with parameters $\bm{u}$ and $\bm{\beta}$.
Problem (\ref{P4}) is a linear programming with respect to $\bm{c}$, which can be solved by standard linear programming methods.
Therefore, we propose the following alternative algorithm to solve Problem $(\ref{P3.2})$.
Specifically, Algorithm \ref{alg1} consists of two key steps:
The first step is to obtain $\bm{c}^* = \arg \underset{\bm{c}}{\max} \mathcal{G}(\bm{c})$  for given $\bm{u}$ and $\beta$.
The second step is to update the parameters $\bm{u}$ and $\bm{\beta}$ according to the modified Newton method until the convergence conditions are satisfied.
Since Problem $\mathcal{P}4.1$ is the sum-of-ratios optimization, the proposed Algorithm \ref{alg1} converges to the solution satisfying KKT conditions as proved in \cite{jong2012efficient}.	
The detailed steps are summarized in Algorithm \ref{alg1}, where
\begin{equation}
\chi_m(u_m) = -1 + u_m\phi_m(\bm{c}), \kappa_m(\beta_m) = \beta_m\phi_m(\bm{c}) -c_mv_m.
\end{equation}
In summary, with given $\bar{x}$, the caching strategy $\bm{c}^*$ can be obtained by Algorithm \ref{alg1}.
Then the one-dimension search is adopted to find the optimal $x^*$ in the interval $[0,\lambda_0]$.
Consequently, a asymptotic caching strategy to Problem $\mathcal{P}1$ can be obtained.

{ The complexity of Algorithm \ref{alg1} is mainly dependent on step 5, and all the other steps provide explicit expressions.
For simplicity, we also adopt gradient projection method to solve the linear programming Problem (\ref{P4}), and the iteration number required for step 5 to obtain an $\varepsilon$-neighborhood of the optimal objective value is $\mathcal{O}(1/\varepsilon)$.}

\section{Numerical Results}

In this section, we first validate the analytical results derived in Section III by simulations.
Then we study the performance of Algorithm 2 and the impacts of different system parameters on the offloading gain and caching strategy.
Unless specified otherwise, the following simulation parameters are used.
The transmit power of users and BSs are set to 15 dBm and 20 dBm, respectively.
The density of BSs is set to $\lambda_B =10^{-4}$ per m$^2$.
{The offloading gain is evaluated over a $100\times 100$ m$^2$ area.}

\subsection{D2D Success Probability}

In this subsection, we validate the accuracy of the derived D2D success probability and its asymptotic approximation.
The impacts of the system parameters are also presented.

\begin{figure}
\centering
\includegraphics[width=1\linewidth]{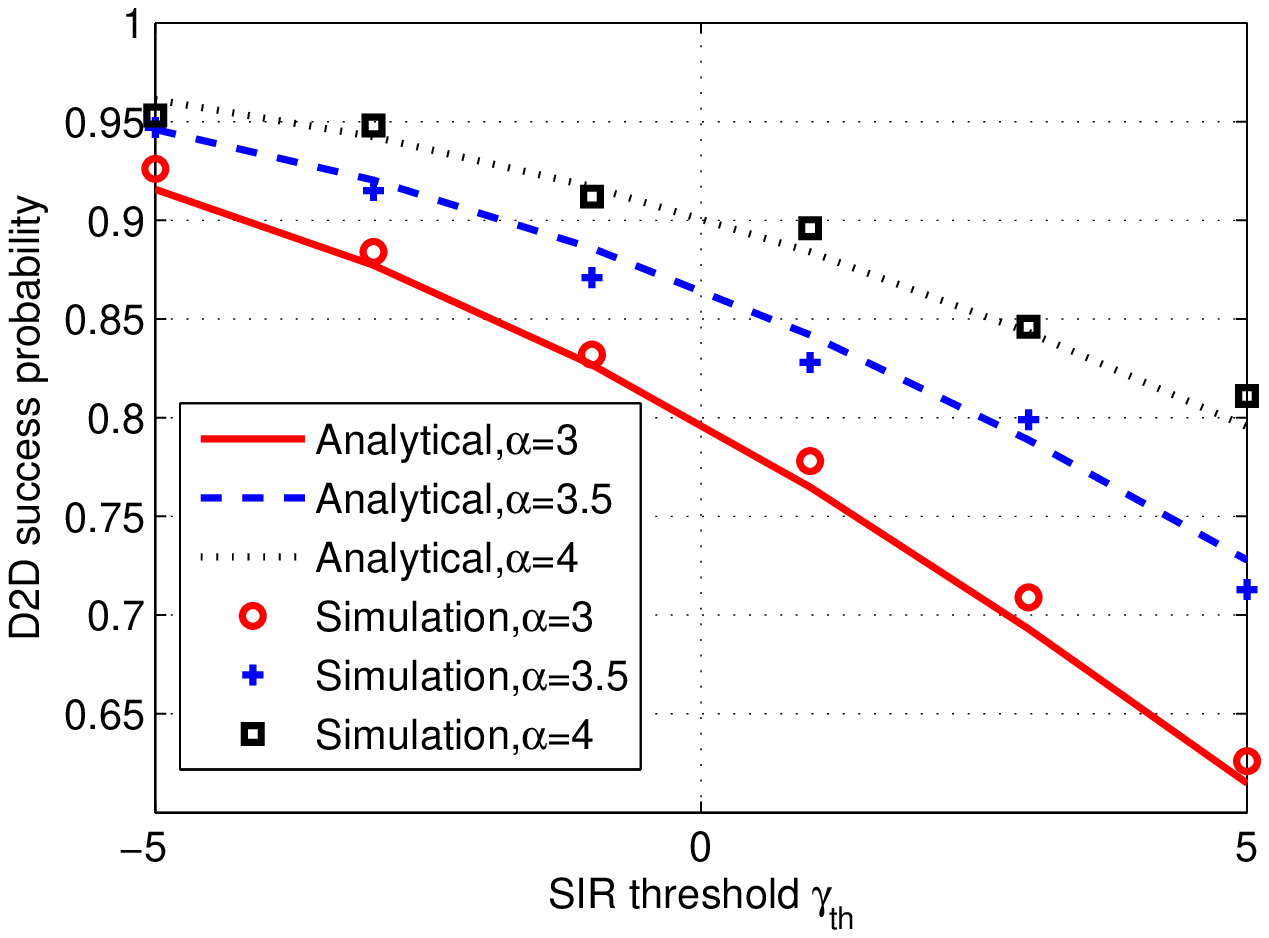}
\caption{D2D success probability performance.}
\label{fig1}
\end{figure}

\begin{figure}
	\centering
	\includegraphics[width=1\linewidth]{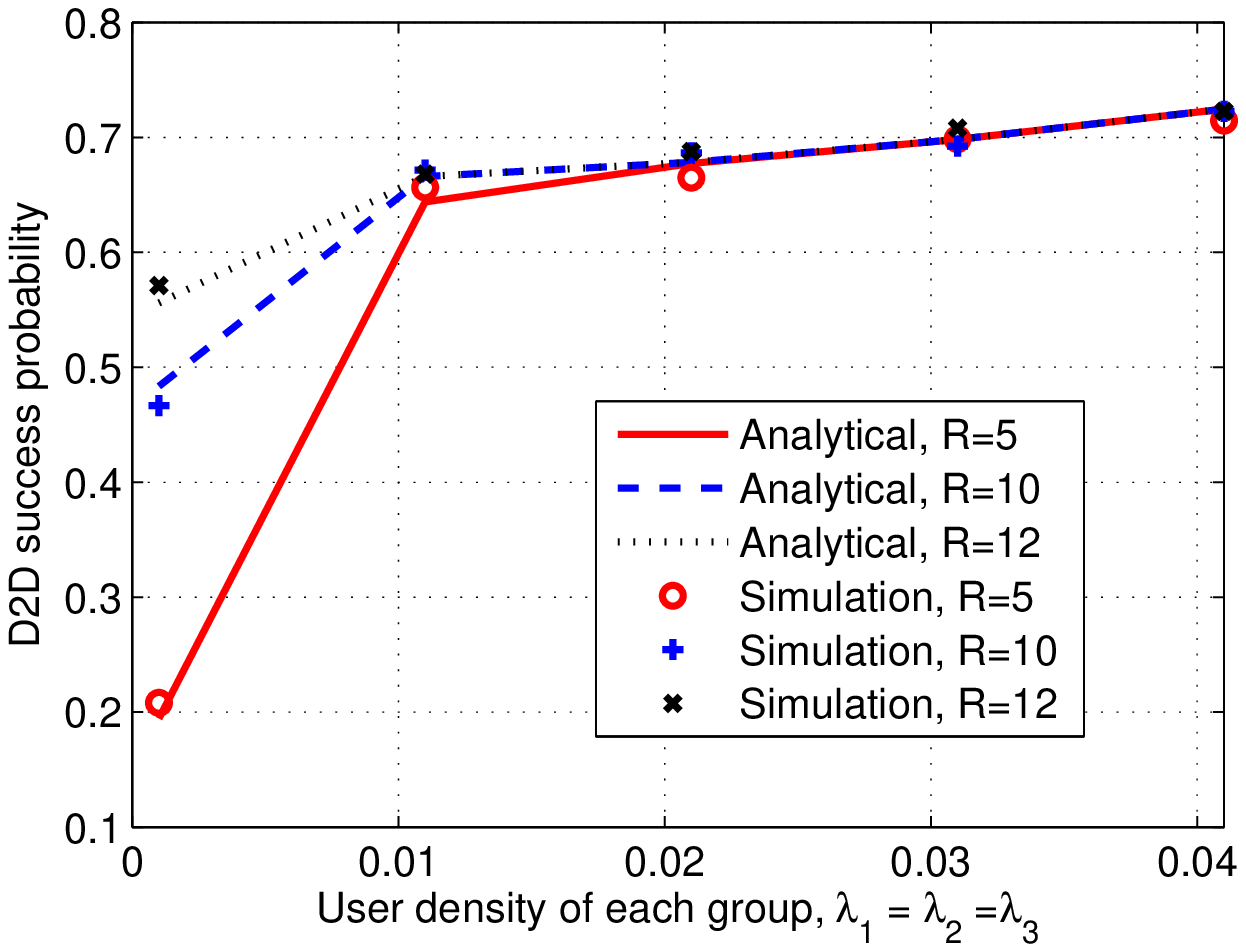}
	\caption{D2D success probability performance.}
	\label{fig2}
\end{figure}

Fig. \ref{fig1} compares the analytical expression of D2D success probability $\mathbb{P}_s$ in (\ref{Eq_Pss}) with the simulated ones, which are averaged over 2000 random realizations.
In Fig. \ref{fig1}, $M=3$, $\lambda_m=0.1, \forall m \in \mathcal{M}$, $B_1=0.1,B_2=0.3,B_3=0.6$, and $\bm{c}=[0.05, 0.09, 0.08]$.
It is shown that the analytical result $\mathbb{P}_{s}$ is very close to the simulated ones for all considered cases, which validates the accuracy of Proposition 3.
As expected, the D2D success probability decreases with the increasing SIR threshold and increases with the path-loss exponent $\alpha$.
The reason can be explained as follows.
Given the caching density, it is more difficult to satisfy the more strict SIR requirement when $\gamma_{th}$ increases, and the received interference at D2D receiver decreases when the path-loss exponent $\alpha$ increases.

Fig. \ref{fig2} illustrates the analytical results $\mathbb{P}_s$ and the simulated ones versus different user densities, where the user densities of different groups are set to be identical as shown in horizontal axis.
As expected, the analytical results are very close to the simulated results.
In addition, a large D2D transmission distance limit can notably improve the D2D success probability when the user density is small, while this benefit becomes inapparent with the large user density.
Moreover, although a high user density can obtain a high D2D success probability, the D2D success probability increases slowly for the high user density region.
This implies that a cost-effective caching strategy should be carefully designed to obtain a large offloading gain.

\begin{figure}

\includegraphics[width=1\linewidth]{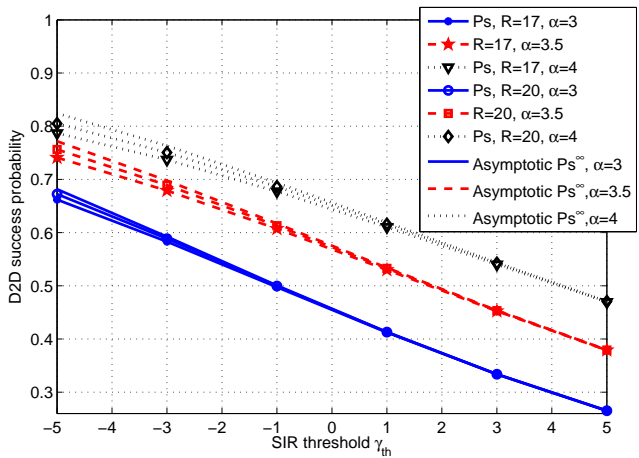}
\caption{D2D success probability performance versus the SIR threshold.}
\label{fig3}
\end{figure}

\begin{figure}

	\includegraphics[width=1\linewidth]{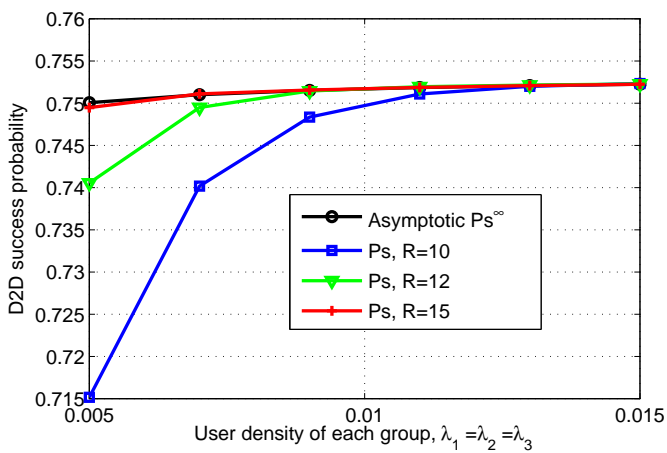}
	\caption{D2D success probability performance versus user density.}
	\label{fig4}
\end{figure}

Fig. \ref{fig3} illustrates the performance gap between D2D success probability $\mathbb{P}_s$ in (\ref{Eq_Pss}), and its asymptotic approximation $\mathbb{P}_{s}^{\infty}$ given in (\ref{PssInfty}).
The simulation parameters are set to be the same as that of Fig. \ref{fig1}.
As expected, the performance gap between $\mathbb{P}_{s}$ and $\mathbb{P}_{s}^{\infty}$ decreases with the maximal D2D transmission distance $R$, since the asymptotic approximation is obtained in the case of $R\to \infty$.
When the path-loss component $\alpha$ increases, the performance gap also decreases due to the fact that the impact of transmission distance decreases.
This validates the feasibility of adopting the asymptotic approximation $\mathbb{P}_{s}^{\infty}$ given in (\ref{PssInfty}).
In addition, it is shown that the performance gap between $\mathbb{P}_{s}$ and $\mathbb{P}_{s}^{\infty}$ also shrinks with the SIR threshold.
The reason is that the D2D success probability is mainly limited by the high SIR requirement when $\alpha$ is large.

Fig. \ref{fig4} illustrates the performance gap between $\mathbb{P}_s$ and $\mathbb{P}_{s}^{\infty}$ with respect to the user density, where the user densities of different groups are set to be identical as shown in horizontal axis.
The other simulation parameters are set to be the same as Fig. \ref{fig2}.
Obviously, the performance gap between $\mathbb{P}_s$ and $\mathbb{P}_{s}^{\infty}$ vanishes when the maximal D2D transmission distance $R$ increases.
Moreover, with given D2D distance, increasing the user density can also decrease the performance gap.
The reason is that the caching density also increases with the user density, since the caching density is set to $c_m=0.5\lambda_m$ for all $m \in \mathcal{M}$.
Consequently, the D2D success probability $\mathbb{P}_s$ keeps increasing with the caching density and the impact of D2D transmission distance on $\mathbb{P}_s$ is reduced.
As the caching density increases, $\mathbb{P}_s$ converges to $\mathbb{P}_{s}^{\infty}$, which also validates the accuracy of the asymptotic approximation.

\subsection{Offloading Gain}

In this subsection, we investigate the offloading gain achieved by proposed algorithms.
	
\begin{figure}
	\centering
	\includegraphics[width=1\linewidth]{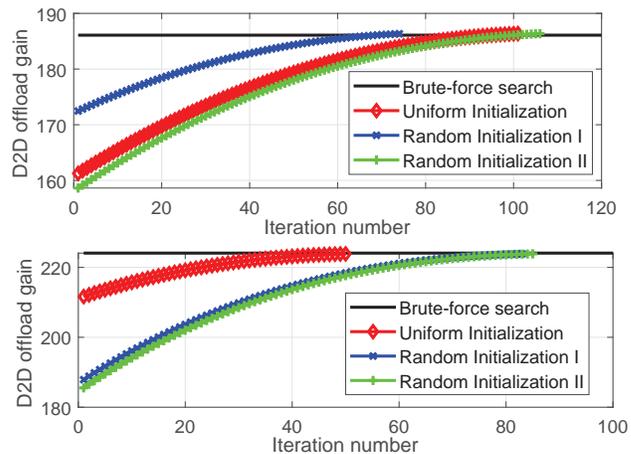}
	\caption{{Convergence behavior of Algorithm \ref{alg1}.}}
	\label{fig5}
\end{figure}

\begin{figure}
	\centering
	\includegraphics[width=1\linewidth]{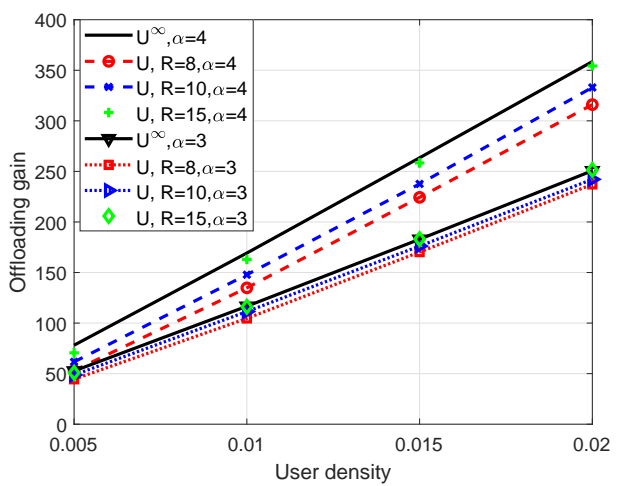}
	\caption{{The performance gap between optimal offloading gain $\mathcal{U}$ obtained by Algorithm \ref{alg0} and $\mathcal{U}^{\infty}$ obtained by Algorithm \ref{alg1}.}}
	\label{fig6}
\end{figure}

Fig. \ref{fig5} shows the convergence behaviors of Algorithm \ref{alg1} with { a given caching summation $\bar{x}=0.02$.
In Fig. \ref{fig5}, for $M=2$, the simulation parameters are set to be $\lambda_1=\lambda_2=0.02, B_1=0.1, B_2=0.9$, and for $M=3, \lambda_1=\lambda_2 =\lambda_3 = 0.02,B_1 = 0.1,B_2 = 0.4, B_3 = 0.5$.
The optimal solution to Problem $\mathcal{P}4$ for comparison is obtained by brute force search, the path-loss exponent is set to $\alpha = 3$, and the SIR requirement is $\gamma_{th}=3$ dB.}
As the formulated problem is nonconvex, different initializations may lead to different local maximums, so that the impacts of initialization methods are also presented.
In uniform initialization, the caching strategy is initialized by the optimal caching strategy achieved in unbiased case regardless of the trust bias.
In random initialization I and II, the initial caching densities are randomly generated with $\sum_{i=1}^{M}c_i =0.02$.
Numerical results illustrate that Algorithm \ref{alg1} always generates a nondecreasing sequence and converges to a fixed point.
Although different initializations slightly affect the performance and the convergence speed, it is shown that the proposed algorithm always converges to the optimal value within limited number of iterations for all considered cases.
In the following simulations, we adopt the uniform initialization method due to its stable convergence behavior.

Fig. \ref{fig6} elaborates the performance gap between the offloading asymptotic gain $\mathcal{U}^{\infty}$ obtained by Algorithm \ref{alg1} and the optimal offloading gain $\mathcal{U}$ obtained by Algorithm \ref{alg0}.
In Fig. \ref{fig6}, $M=2$, and the simulation parameters are set to be the same as that of Fig. \ref{fig5}.
As expected, the performance gap decreases with the maximal D2D transmission distance $R$.
It is observed that the offloading gain increases with user density as more users are interested in the reference content.
In addition, when path-loss component $\alpha$ increases, the offloading gain also increases due to a larger D2D success probability as shown in Fig. \ref{fig1}.
Moreover, the performance gap between $\mathcal{U}^{\infty}$ and $\mathcal{U}$ also increases as $\alpha$ increases.
The reason is that the performance gap is introduced by the impact of $R$, which can be magnified by a larger path-loss component $\alpha$.
Overall, the feasibility of adopting the asymptotic case to approximate the original case has been verified.
Then, in the following simulation, the maximal D2D distance is $R=15$m due to its tiny gap with the asymptotic approximation.

{For the D2D offloading networks, the problem of maximizing the density of successful receptions (i.e., offloading gain) is investigated in \cite{Malak} without considering the user preference and trust.
The heterogeneous user preference is incorporated in the caching strategy in \cite{Cooperative2017} but using the hit probability as an objective function.
However, for the D2D offloading system with the constraints of user preference and trust, as stated in Section I, the considered caching optimization problem to maximize the offloading gain has not been studied.
Therefore, to evaluate the performance of the proposed algorithm, the following caching policies are simulated for comparison:
\begin{itemize}	
\item ``OneUT'': This is a deterministic caching policy is given in \cite{Malak} for the one content case. Resulting from the fact that impacts of social characteristics are ignored, the approach in \cite{Malak} suggests that the density of "UT" approaches to zero and only one UT is enough for a user group. Then, the caching density of each group is set to be $c_m = \delta$, where $\delta$ is the searching stepsize of the optimal user density in Algorithm 1.
\item ``Uniform'': This is the caching policy developed in the unbiased case, and it can be viewed as an extension of the caching approach in \cite{Cooperative2017} considering multiple user groups with different interests, of which the approach is adjusted for a fair comparison by using the offloading gain as the objective function.
\end{itemize}}

\begin{figure}[htbp]
	\centering
	\includegraphics[width=1\linewidth]{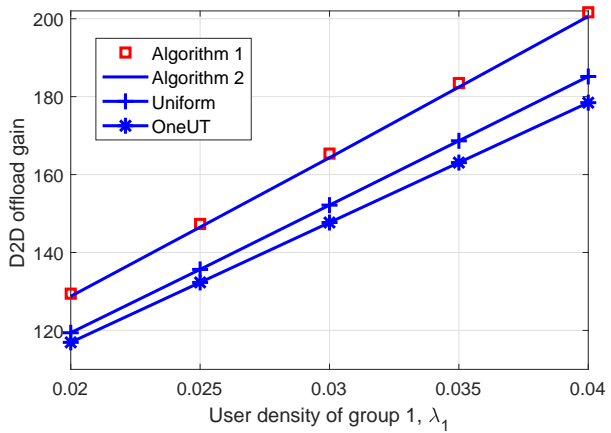}
	\caption{{The impact of user density on the offloading gain, M=2.}}
	\label{fig7}
\end{figure}

\begin{figure}[htbp]
	\centering
	\includegraphics[width=1\linewidth]{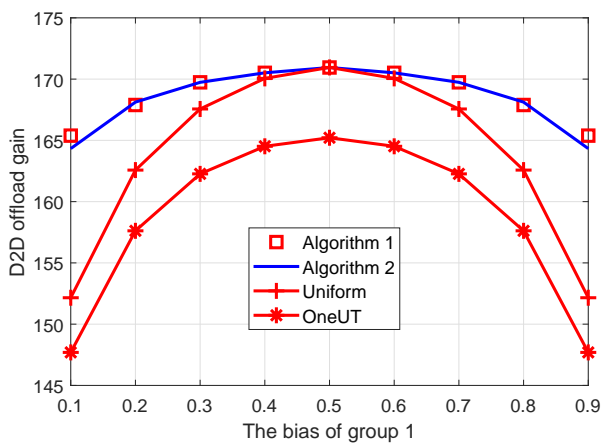}
	\caption{{The impact of trust bias on the offloading gain, M=2.}}
	\label{fig8}
\end{figure}

{
Fig. \ref{fig7} illustrates the impact of user density on the offloading gains achieved by different algorithms.
In Fig. \ref{fig7}, $M=2$, the user density of group 2 is 0.02 user per $m^2$, the maximal D2D distance is $R=15$m, and the other simulation parameters are same as those of Fig. \ref{fig6}.
The offloading gain increases with the user density $\lambda_1$, as increasing numbers of users are interested in the content.
It is observed that the asymptotic offloading gain achieved by Algorithm 2 well matches the optimal offloading gain obtained by Algorithm 1.
Also, the offloading gains achieved by the proposed algorithms outperform the other algorithms, resulting from the fact that social characteristics including the different interests and trust biases are considered in the proposed caching strategy.
Moreover, the offloading gain achieved by ``Uniform'' algorithm is larger than that of the ``OneUT'' algorithm, since the different user interests are considered in the ``Uniform'' algorithm.
It is worth pointing out that the performance gap increases with the user density $\lambda_1$.
This implies that the proposed algorithm can well adapt to the changes in system parameters and fully utilize the caching resources to achieve a larger offloading gain.}

{
In Fig. \ref{fig8}, the trust bias of group 1 denoted by $B_1$ is displayed on the x-axis, the trust bias of group 2 equals $1-B_1$, $\lambda_1 =0.04, \lambda_1 =0.02$, and the other simulation parameters are same as those of Fig. \ref{fig7}.
At the two ends of the x-axis, the difference between the trust bias of group 1 and the trust bias of group 2 is the largest.
It is observed that the offloading gain is notably affected by the different levels of trustworthiness among users, i.e., different bias values.
As the trustworthiness is neglected in the ``Uniform'' caching policy and ``OneUT'' caching policy,
the impact of trustworthiness on the obtained offloading gain is obvious.
Moreover, when the difference of the trustworthiness between user groups increases, the offloading gains achieved by these two algorithms degrade significantly.
The optimal offloading gain obtained by proposed Algorithm 1 matches the asymptotic offloading gain achieved by Algorithm 2, and they achieve larger offloading gains than the other two caching strategies.
However, the proposed algorithms can well adapt to the changes in trust bias, and the degradation of offloading gain is largely alleviated.
This is due to the fact that the social characteristics are fully exploited in the design of the proposed caching strategy.}

\begin{figure}
\centering
\includegraphics[width=0.9\linewidth]{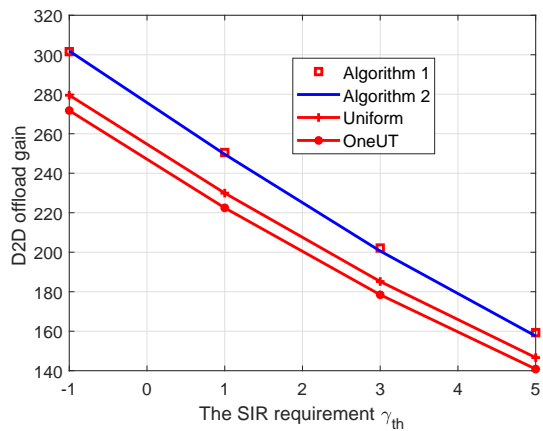}
\caption{{The impact of SIR threshold on the offloading gain, M=2.}}
\label{fig9}
\end{figure}

Fig. \ref{fig9} shows the impact of SIR threshold on the offloading performance.
In Fig. \ref{fig9}, the simulation parameters are set to be $\lambda_1=0.03,\lambda_2=0.01,B_1 =0.1,B_2 =0.9, \alpha = 3$.
It can be seen that the D2D offloading gain decreases when the SIR threshold $\gamma_{th}$ increases, due to the decreasing D2D success probability.
As expected, the proposed algorithms outperforms the other two caching strategies for all the considered configurations.
Moreover, we can see that the performance gap decreases as the SIR threshold $\gamma_{th}$ increases.
This is because as $\gamma_{th}$ increases, the D2D success probability decreases as shown in Fig. \ref{fig1}.

\begin{figure}
	\centering
	\includegraphics[width=0.9\linewidth]{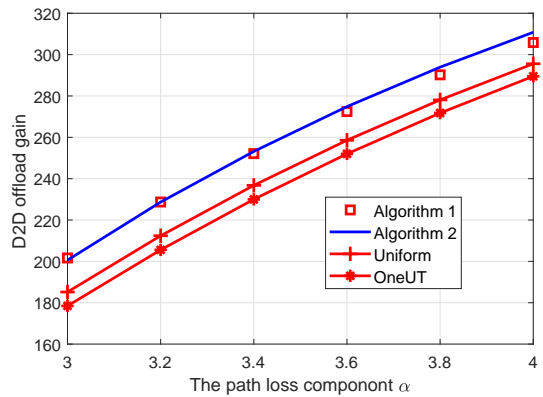}
	\caption{{The impact of path-loss component on the offloading gain, M=2.}}
	\label{fig10}
\end{figure}

Fig. \ref{fig10} shows the impact of path-loss component on the offloading performance.
In Fig. \ref{fig10}, $\gamma_{th} = 3$dB, and the other simulation parameters are same as Fig. \ref{fig9}.
When the path-loss component $\alpha$ increases, the offloading gain increases due to a increasing D2D success probability as shown in Fig. \ref{fig1}.
It is observed that the performance gaps between the proposed Algorithm
Moreover, it is interesting to see that the performance gap keeps invariant in Fig. \ref{fig10}, when the path-loss component changes.
This implies that the performance gap is mainly determined by the social characteristics and the SIR threshold $\gamma_{th}$.

\subsection{Caching Strategy}

\begin{figure}
\centering
\includegraphics[width=1\linewidth]{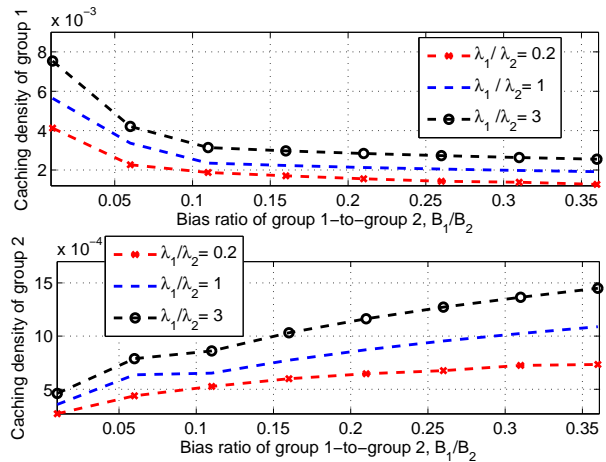}
\caption{The impact of user density and trust bias on the caching density, where $M=2$, $\alpha = 3, \lambda_2=0.1$ and $\gamma_{th}=3$ dB.}
\label{fig11}
\end{figure}

We investigate the impacts of user density and trust bias on the caching density achieved by Algorithm \ref{alg1} in Fig. \ref{fig11}.
For the impact of user density, it is observed that the caching probability of group 1 increases with the density ratio $\lambda_1/\lambda_2$, while the caching probability of group 2 decreases with the density ratio $\lambda_1/\lambda_2$.
This is due to that the relative density of interested user in group 1 increases and the the relative density of interested user in group 2 decreases, which shows a positive correlation between the caching density and user density.
For the impact of association bias, the caching probability of group 1 decreases with the bias ratio $B_1/B_2$, while the caching probability of group 2 increases the bias ratio $B_1/B_2$.
Moreover, the caching density of group 1 is larger than that of group 2 due to that $B_1 < B_2$, which implies that the caching density is inversely proportional to the bias value.
Since URs always prefer to associate with the group 2 due to its high association bias, a larger caching density of group 2 will introduce more severer interference than that of group 1.
Accordingly, more contents should be deployed to the groups with low trust bias values, to increase the D2D success probabilities contributed by these groups.

\begin{figure}
	\centering
	\includegraphics[width=1\linewidth]{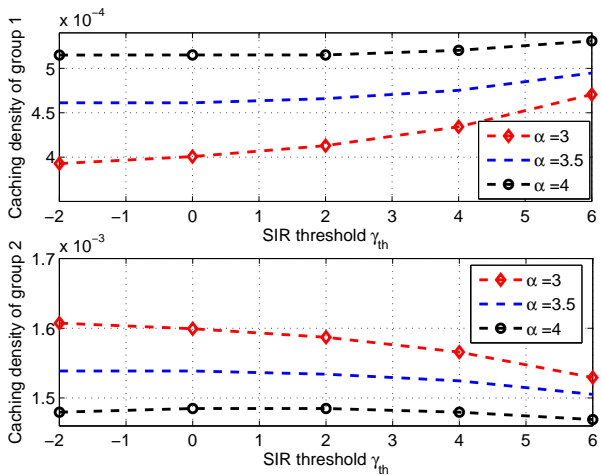}
	\caption{The impact of SIR threshold and path-loss component on the caching density, where $M=2$, $\lambda_1=\lambda_2=0.05$ and $B_1/B_2 = 10$.}
	\label{fig12}
\end{figure}

Fig \ref{fig12} shows the impacts of the SIR threshold $\gamma_{th}$ and path-loss component $\alpha$ on the caching density.
It is observed that caching density of group 1 increases as $\gamma_{th}$ and $\alpha$ increases, while the caching density of group 2 decreases as $\gamma_{th}$ and $\alpha$ increases.
The reason can be explained as follows.
Since $B_1>B_2$ in Fig \ref{fig12}, the caching density of group 2 is larger than that of group 1, which also leads to larger interference in group 2 than that in group 1.
Since a larger $\gamma_{th}$ requires a larger received SIR, the caching density of group 2 should decrease and that of group 1 should increase.
Besides, a larger $\alpha$ will reduce the impact of bias ratio $B_1/B_2$ according to Proposition 3, so that the difference of caching densities of differen groups reduces.

\section{Conclusion}
In this work, we considered the joint impact of social characteristics and physical transmission conditions on the D2D assisted content offloading performance.
The caching strategy was optimized to maximize the offloading gain, which was defined as the offloaded traffic via D2D communications.
The D2D success probability was first derived, of which the complicated expression made the formulated problem nonconvex and difficult to solve.
We first investigated the optimal caching strategy for the unbiased case, and then proposed a numerical algorithm to obtain the global optimal caching strategy for the general case.
To reduce complexity, a low-complexity iterative algorithm was then proposed to obtain the asymptotic caching strategy.
Finally, by simulations, the derived D2D successful transmission probability and its asymptotic approximation were validated.
Our proposed algorithm outperforms the existing caching strategies in terms of offloading gain, and it can well adapt to the changes of system parameters and wisely utilize the caching resources.
The obtained caching strategy is jointly determined by the social characteristics and physical transmission conditions.
A positive correlation is shown between the caching density and the user density, but the caching density is inversely proportional to the association bias.
{
When the dynamic nature of users is considered, D2D offloading networks are facing new challenges.
In addition, the offloading gain can be improved by optimizing D2D user association.
These issues will be investigated in our future work.}

\section*{Acknowledgment}
This work was supported in part by the National Science and Technology Major Project under Grant No. 2016ZX03001016-003, the National Natural Science Foundation of China under Grants No. 61871128, No. 61521061, No. 61571125, the Central University Basic Research fund under Grant No. 2242019K40202, the UK Royal Society Newton International Fellowship under Grant NIF$ \verb|\|$ R1$ \verb|\|$180777.

\bibliographystyle{IEEEtran}
\bibliography{IEEEabrv,Refer_06_19}

\end{document}